\newtheorem{theorem}{Theorem}
\newtheorem{Lemma}{Lemma}
\newtheorem{Conjecture}{Conjecture}
\def\S{\mathcal{S}}
\def\P{\mathcal{P}}
\def\ie{{i.e.}}
\def\star{\textasteriskcentered}
\def\dstar{\textasteriskcentered\textasteriskcentered}
\renewcommand{\emptyset}{\varnothing}
\newcommand{\kmer}[1]{\boldsymbol{#1}}
\newcommand{\edit}[2]{\textrm{edit}\left(\kmer{#1}, \kmer{#2}\right)}
\def\marrow{{\marginpar[\hfill$\longrightarrow$]{$\longleftarrow$}}}
\def\ke#1{\textcolor{red}{{\sc Ke says: }{\marrow\sf #1}}}
\def\mingfu#1{\textcolor{blue}{{\sc Mingfu says: }{\marrow\sf #1}}}
\newcommand{\commentblock}[1]{}
\title{Locality-sensitive bucketing functions for the edit distance \footnote{This work is supported by the US National Science Foundation (DBI-2019797 to M.S.) and the US National Institutes of Health (R01HG011065 to M.S.).}}
\renewcommand\footnotemark{}
\date{}
\author[1]{Ke Chen}
\author[1,2]{Mingfu Shao}
\affil[1]{\footnotesize Department of Computer Science and Engineering, School of Electronic Engineering and Computer Science, The Pennsylvania State University, United States}
\affil[2]{\footnotesize Huck Institutes of the Life Sciences, The Pennsylvania State University, United States}
\begin{document}
\maketitle

\begin{abstract}
  Many bioinformatics applications involve bucketing a set of sequences 
  where each sequence is allowed to be assigned into multiple buckets.
  To achieve both high sensitivity and precision, bucketing methods are desired to
  assign similar sequences into the same bucket
  while assigning dissimilar sequences into distinct buckets.
  Existing $k$-mer-based bucketing methods have been efficient
  in processing sequencing data with low error rate, but encounter much reduced
  sensitivity on data with high error rate.
  Locality-sensitive hashing~(LSH) schemes are able to mitigate this issue through tolerating
  the edits in similar sequences, but state-of-the-art methods still have large gaps.
  Here we generalize the LSH function by allowing it to hash one sequence into multiple buckets.
  Formally, a bucketing function, which maps a sequence (of fixed length) into a
  subset of buckets, is defined to be $(d_1, d_2)$-sensitive if any two sequences
  within an edit distance of $d_1$ are mapped into at least one shared bucket,
  and any two sequences with distance at least $d_2$ are mapped into disjoint subsets of buckets.
  We construct locality-sensitive bucketing~(LSB) functions with a variety of values of $(d_1,d_2)$
  and analyze their efficiency with respect to the total number of buckets
  needed as well as the number of buckets that a specific sequence is mapped to.
  We also prove lower bounds of these two parameters in different settings
  and show that some of our constructed LSB functions are optimal.
  These results provide theoretical foundations for their practical use
  in analyzing sequences with high error rate while also 
  providing insights for the hardness of designing ungapped LSH functions.
\end{abstract}

\section{Introduction}
\commentblock{
=======
binning sequences to avoid all-vs-all comparison:
 - UHS
 - overlap detection
 - seed/extend / dBG uses this idea / metagenomics binning (?)

LSH + or-amplification
 - LSH related papers 
 - one kmer can be assigned to multi-bins
 - embedding is hard

our definition

generalize (deterministic) LSH: when |f(s)| = 1
more powerful when allowing |f(s)| > 1 (? or move to Discussion)

summarize our results
}

Comparing a set of given sequences is a common task involved in
many bioinformatics applications, such as 
homology detection~\cite{chen2018comprehensive},
overlap detection and the construction of overlap graphs~\cite{minimap2,berlin2015assembling,song2020overlap}, 
phylogenetic tree reconstruction,
and isoform detection from circular consensus sequence (CCS) reads~\cite{isocon}, to name a few.
The naive all-vs-all comparison gives the most comprehensive information
but does not scale well.
An efficient and widely-used approach that avoids unnecessary comparisons is \emph{bucketing}:
a linear scan is employed to assign each sequence into one or multiple buckets,
followed by pairwise comparisons within each bucket.
The procedure of assigning sequences into buckets, which we refer to as a \emph{bucketing function},
is desired to be both ``sensitive'', i.e., two similar sequences ideally appear
in at least one shared bucket so that they can be compared,
and ``specific'', i.e., two dissimilar sequences ideally appear in disjoint buckets so that
they can be exempt from comparison.
The criteria of similar/dissimilar sequences are application-dependent;
in this work we study bucketing functions for the edit distance~(Levenshtein distance).


A simple yet popular bucketing function is to put a sequence into buckets labeled with its own $k$-mers.
The popular seed-and-extend strategy~\cite{altschul1990basic,altschul1997gapped} implicitly uses this approach.
Various sketching methods such as
minimizer~\cite{roberts2004reducing,schleimer2003winnowing,roberts2004preprocessor, marccais2018asymptotically}
and universal hitting set~\cite{OPM+17, DGKM19}
reduce the number of buckets a sequence is assigned to by
only considering a subset of representative $k$-mers.
These bucketing methods based on exact $k$-mer matching
enjoyed tremendous success in analyzing next-generation sequencing~(NGS) data,
but are challenged by the third-generation long-reads sequencing data
represented by PacBio~\cite{rhoads2015pacbio} and Oxford Nanopore~\cite{jain2018nanopore} technologies;
due to the high error rate, sequences that should be assigned to the same
buckets hardly share any identical $k$-mers~(for a reasonably large $k$ such as
$k = 21$ with 15\% error rate), and therefore results in poor sensitivity.


To address this issue, it is required to be able to recognize similar but not necessarily identical sequences.
A general solution is locality-sensitive hashing (LSH)~\cite{omh, Mc21} where
with high probability, similar sequences are sent
into the same bucket~(i.e., there is a hash collision), and with high
probability dissimilar sequences are sent into different buckets.
However, designing locality-sensitive hashing functions for the edit distance
is hard; the state-of-the-art method Order Min Hash (OMH) is
proved to be a gapped LSH but admits a large gap~\cite{omh}.
Another related approach is embedding the metric space induced by the edit distance
into more well-studied normed spaces~\cite{BJKK04, OR07, song2020overlap}.
However, such an embedding is also hard;
for example, it is known that the embedding into $L_1$ cannot be distortion-free~\cite{KR09}.
In addition, there are seeding/sketching methods such as spaced $k$-mer~\cite{flash, patternhunter},
indel seeds~\cite{mak2006indel}, and the more recent strobemer~\cite{strobemer}
that allow gaps in the extracted seeds to accommodate some edits,
but an edit that happens within the chosen seed can still cause mismatches.

It is worth noting that locality-sensitive hashing functions, when interpreted as bucketing functions,
assign a sequence into exactly one bucket: buckets are labeled with hash values,
and a sequence is put into the single bucket where it is hashed to.
In this work, we propose the concept of \emph{locality-sensitive bucketing}~(LSB) functions
as a generalization of LSH functions by allowing it to assign a sequence into multiple buckets.
Formally, a bucketing function, which maps a sequence (of fixed length) into 
one or more buckets, is defined to be $(d_1, d_2)$-sensitive if any two sequences
within an edit distance of $d_1$ are mapped into at least one shared bucket,
and any two sequences with an edit distance at least $d_2$
are mapped into disjoint subsets of buckets.
While a stochastic definition by introducing a distribution on a family of bucketing
functions can be made in a similar way as the definition of LSH functions,
here we focus on this basic, deterministic definition.
We design several LSB functions for a variety of values of $(d_1,d_2)$
including both ungapped~($d_2 = d_1 + 1$) and gapped~($d_2 > d_1 + 1$) ones.
This demonstrates that allowing one sequence to appear in multiple buckets
makes the locality-sensitive properties easier to satisfy.
Moreover, our lower bound proof shows that any $(1,2)$-sensitive bucketing function
must put each sequence~(of length $n$) into at least $n$ buckets
(see Lemma~\ref{lem:int-1-fs-minimum}),
suggesting that certain ungapped locality-sensitive hashing functions,
where each sequence is sent to a single bucket,
may not exist.

The rest of this paper is organized as follows.
In Section~\ref{sec:prelim}, we give the precise definition of LSB functions
and propose criteria to measure them.
In Sections~\ref{sec:integers} and \ref{sec:sampling}, we design LSB functions using two different approaches,
the results are summarized in Section~\ref{sec:summary}.
We show experimental studies in Section~\ref{sec:experiments}, with a focus on demonstrating
the performance of gapped LSB functions.
Future directions are discussed in Section~\ref{sec:discussion}.

\section{Basics of locality-sensitive bucketing~(LSB) functions}\label{sec:prelim}
Given an alphabet $\Sigma$ with $|\Sigma|>1$ and a natural number $n$,
let $\S_n=\left(\Sigma^n, \textrm{edit}\right)$ be the metric space
of all length-$n$ sequences equipped with the Levenshtein (edit) distance.
Given a set $B$ of buckets, a bucketing function $f$ maps $\S_n$
to $\P(B)$, the power set of $B$.
This can be viewed as assigning a sequence $\kmer{s}$ of length $n$ to a subset
of buckets $f(\kmer{s})\subset B$.
Let $d_1 < d_2$ be two non-negative integers,
we say a bucketing function $f$ is \emph{$\left(d_1, d_2\right)$-sensitive} if
\begin{align}
  \edit{s}{t} \leq d_1 \implies f(\kmer{s})\cap f(\kmer{t})\neq\emptyset,
  \label{eq:lsb1}\\
  \edit{s}{t} \geq d_2 \implies f(\kmer{s})\cap f(\kmer{t})=\emptyset.
  \label{eq:lsb2}
\end{align}

We refer to the above two conditions as LSB-properties~\eqref{eq:lsb1} and
\eqref{eq:lsb2} respectively.
Intuitively, the LSB-properties state that, if two length-$n$ sequences
are within an edit distance of $d_1$, then the bucketing function
$f$ guarantees assigning them to at least one same bucket, and
if two length-$n$ sequences have an edit distance
at least $d_2$, then the bucketing function $f$ guarantees not assigning
them to any shared bucket.
In other words, $(d_1,d_2)$-sensitive bucketing functions perfectly distinguish length-$n$ sequences
within distance $d_1$ from those with distances at least $d_2$.
It is easy to show that if $f:\S_n\to\P(B)$ is a $(d_1, d_2)$-sensitive bucketing function, then
$f(\kmer{s})\neq\emptyset$ for all $\kmer{s}\in\S_n$.
In fact, since $\edit{s}{s}=0\leq d_1$,
the LSB-property~\eqref{eq:lsb1} implies that $f(\kmer{s})=f(\kmer{s})\cap f(\kmer{s})\neq \emptyset$.
If $d_1=d_2-1$ then we say the bucketing function is ungapped;
otherwise it is called gapped.

We note that the above definition of LSB functions generalize the (deterministic)
LSH functions: if we require that $|f(\kmer{s})| = 1$
for every sequence $\kmer{s}\in \S_n$, i.e., $f$ maps a sequence to a single bucket,
then $f(\kmer{s})\cap f(\kmer{t})\neq\emptyset$ implies $f(\kmer{s}) = f(\kmer{t})$ 
and $f(\kmer{s})\cap f(\kmer{t}) = \emptyset$ implies $f(\kmer{s})\neq f(\kmer{t})$.

Two related parameters can be used to measure an LSB function:
$|B|$, the total number of buckets, and $|f(\kmer{s})|$,
the number of different buckets that contain a specific sequence $\kmer{s}$.
From a practical perspective, it is desirable to keep both parameters small. 
We therefore aim to design LSB functions that minimize $|B|$ and $|f(\kmer{s})|$.
Specifically, in the following sections, we will construct $(d_1, d_2)$-sensitive 
bucketing functions with a variety of values of $(d_1,d_2)$,
and analyze their corresponding
$|B|$ and $|f(\kmer{s})|$; we will also prove lower bounds of 
$|B|$ and $|f(\kmer{s})|$ in different settings and show that some of our constructed
LSB functions are optimal, in terms of minimizing
these two parameters.

The bounds of $|B|$ and $|f(\kmer{s})|$
are closely related to the structure of the metric space $\S_n$.
For a sequence $\kmer{s}\in\S_n$, its $d$-neighborhood,
denoted by $N_n^d(\kmer{s})$, is the subspace of all sequences of length $n$
with edit distance at most $d$ from $\kmer{s}$; 
formally $N_n^d(\kmer{s}) = \{\kmer{t} \in \S_n \mid \textrm{edit}(\kmer{s}, \kmer{t}) \le d\}$.
The following simple fact demonstrates
the connection between the bound of
$|f(\kmer{s})|$ and the structure
of $\S_n$, which will be used later.


\begin{Lemma}\label{lem:ind-set2}
  Let $\kmer{s}$ be a sequence of length $n$.
  If $N_n^{d_1}(\kmer{s})$ contains a subset $X$ with $|X|=x$
  such that every two sequences in $X$ have an edit distance at least $d_2$,
  then for any $(d_1,d_2)$-sensitive bucketing function $f$ we must
  have $|f(\kmer{s})| \ge x$.
\end{Lemma}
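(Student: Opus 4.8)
The plan is to exploit the defining property of $X$ — that it sits inside the $d_1$-ball around $\kmer{s}$ while being $d_2$-separated — and argue that $f$ is forced to use a distinct bucket for each element of $X$, at least when restricted to the buckets of $\kmer{s}$. Concretely, let $X = \{\kmer{t}_1, \dots, \kmer{t}_x\}$. For each $i$, since $\kmer{t}_i \in N_n^{d_1}(\kmer{s})$ we have $\edit{s}{t_i} \le d_1$, so LSB-property~\eqref{eq:lsb1} gives a bucket $b_i \in f(\kmer{s}) \cap f(\kmer{t}_i)$; in particular $b_i \in f(\kmer{s})$. The key step is then to show the $b_i$ are pairwise distinct: if $b_i = b_j$ for some $i \ne j$, then this common bucket lies in $f(\kmer{t}_i) \cap f(\kmer{t}_j)$, contradicting LSB-property~\eqref{eq:lsb2}, because $\edit{t_i}{t_j} \ge d_2$ by the assumption on $X$. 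Hence $\{b_1, \dots, b_x\}$ is a set of $x$ distinct buckets all contained in $f(\kmer{s})$, so $|f(\kmer{s})| \ge x$.

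I would present this as essentially a three-line argument: (1) invoke \eqref{eq:lsb1} to extract a witness bucket $b_i$ for each $\kmer{t}_i$; (2) invoke \eqref{eq:lsb2} to show $b_i \ne b_j$ whenever $i \ne j$; (3) conclude the $b_i$'s form an $x$-element subset of $f(\kmer{s})$. One minor point worth spelling out for cleanliness is the well-definedness of the selection: for each $i$ we simply \emph{choose} one element $b_i$ from the nonempty intersection $f(\kmer{s}) \cap f(\kmer{t}_i)$; no canonical choice is needed since we only care about the existence of $x$ distinct values.

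There is no real obstacle here — the lemma is a direct unpacking of the two LSB-properties, and the statement is essentially the observation that $X$ behaves like an ``independent set'' (of pairwise far-apart points) that $f$ must separate. If anything, the only thing to be careful about is not conflating ``$b_i \in f(\kmer{s}) \cap f(\kmer{t}_i)$'' with ``$f(\kmer{s})$ contains exactly these $b_i$'': the conclusion is only the lower bound $|f(\kmer{s})| \ge x$, since $f(\kmer{s})$ may well contain further buckets beyond $\{b_1,\dots,b_x\}$ (e.g.\ buckets shared with sequences outside $X$). So the proof asserts a containment $\{b_1,\dots,b_x\} \subseteq f(\kmer{s})$ with the $b_i$ distinct, and reads off the cardinality bound.
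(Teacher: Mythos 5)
Your proof is correct and follows the same approach as the paper's: use LSB-property~\eqref{eq:lsb1} to find a shared bucket $b_i \in f(\kmer{s}) \cap f(\kmer{t}_i)$ for each element of $X$, then use LSB-property~\eqref{eq:lsb2} to conclude these buckets are pairwise distinct. Your version simply spells out the bookkeeping (the explicit choice of witnesses and the distinctness argument) a bit more carefully than the paper does.
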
%
\begin{proof}
Let $f$ be an arbitrary $(d_1,d_2)$-sensitive bucketing function.
By the LSB-property~\eqref{eq:lsb2}, these $x$ sequences
must be assigned to distinct buckets by $f$.
On the other hand, since they are all in $N_n^{d_1}(\kmer{s})$,
the LSB-property~\eqref{eq:lsb1} requires that $f(\kmer{s})$ overlaps
with $f(\kmer{t})$ for each sequence $\kmer{t}\in X$.
Combined, we have $|f(\kmer{s})| \ge x$.
\end{proof}

\section{An optimal \texorpdfstring{$(1,2)$}{(1,2)}-sensitive bucketing function}\label{sec:integers}
In the most general setting of LSB functions, the labels of buckets in $B$ are just symbols
that are irrelevant to the construction of the bucketing function.
Hence we can let $B=\{1, \ldots, |B|\}$.
The remaining of this section studies $(1, 2)$-sensitive bucketing functions in
this general case.
We first prove lower bounds of $|B|$ and $|f(\kmer{s})|$ in this setting;
we then give an algorithm to construct an optimal $(1, 2)$-sensitive bucketing function $f$
that matches these bounds.

\begin{Lemma}\label{lem:int-1-fs-minimum}
  If $f: \S_n \to \P(B)$ is $(1,2)$-sensitive, then for each $\kmer{s}\in\S_n$, $|f(\kmer{s})|\geq n$.
\end{Lemma}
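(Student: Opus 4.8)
The plan is to reduce the statement to Lemma~\ref{lem:ind-set2}: it suffices to exhibit, inside the $1$-neighborhood $N_n^1(\kmer{s})$ of an arbitrary sequence $\kmer{s}\in\S_n$, a subset $X$ of size $n$ whose elements are pairwise at edit distance at least $2$. Once such an $X$ is found, applying Lemma~\ref{lem:ind-set2} with $x=n$ yields $|f(\kmer{s})|\ge n$ for every $(1,2)$-sensitive bucketing function $f$, which is exactly what we want.

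To construct $X$, write $\kmer{s}=s_1s_2\cdots s_n$. Since $|\Sigma|>1$, for each position $i$ we can pick a letter $c_i\in\Sigma\setminus\{s_i\}$; let $\kmer{t}^{(i)}$ be the sequence obtained from $\kmer{s}$ by replacing its $i$-th letter with $c_i$. Each $\kmer{t}^{(i)}$ is obtained from $\kmer{s}$ by a single substitution, so $\edit{s}{t^{(i)}}=1$ and hence $\kmer{t}^{(i)}\in N_n^1(\kmer{s})$; moreover the $\kmer{t}^{(i)}$ are pairwise distinct, since $\kmer{t}^{(i)}$ disagrees with $\kmer{s}$ exactly at coordinate $i$. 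Thus $X:=\{\kmer{t}^{(1)},\ldots,\kmer{t}^{(n)}\}$ has $|X|=n$.

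The crux is to show $\edit{t^{(i)}}{t^{(j)}}\ge 2$ for $i\neq j$. Here I would use the elementary observation that two strings of the same length are at edit distance at most $1$ if and only if they differ in at most one coordinate: a single edit operation that turns an $n$-string into another $n$-string cannot change the length, so it must be a substitution. Now $\kmer{t}^{(i)}$ and $\kmer{t}^{(j)}$ agree with $\kmer{s}$ outside $\{i,j\}$, while at position $i$ they read $c_i\neq s_i$ versus $s_i$, and at position $j$ they read $s_j$ versus $c_j\neq s_j$; hence they differ in exactly the two coordinates $i$ and $j$. By the observation, $\edit{t^{(i)}}{t^{(j)}}\ge 2$, so $X$ is a set of $n$ sequences in $N_n^1(\kmer{s})$ that are pairwise at edit distance at least $2$, and the lemma follows from Lemma~\ref{lem:ind-set2}.

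I do not expect a real obstacle here; the only non-mechanical point is noticing that length preservation is what makes this work — it simultaneously guarantees $\kmer{t}^{(i)}\in N_n^1(\kmer{s})$ and lets us upgrade a Hamming-distance lower bound to an edit-distance lower bound for the pairs in $X$. One should also confirm the (trivial) degenerate case $n=0$; for $n\ge 1$ the construction plainly produces $n$ sequences and the bound is immediate.
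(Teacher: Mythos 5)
Your proposal is correct and follows essentially the same route as the paper: both invoke Lemma~\ref{lem:ind-set2} and exhibit the $n$ single-substitution neighbors of $\kmer{s}$, observing that any two of them differ in exactly two coordinates and hence cannot be one edit apart since a length-preserving single edit must be a substitution. The extra care you take in spelling out why Hamming distance two forces edit distance at least two is the same argument the paper states more tersely.
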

\begin{proof}
  According to Lemma~\ref{lem:ind-set2} with $d_1 = 1$ and $d_2 = 2$,
  we only need to show that $N_n^1(\kmer{s})$ contains 
  $n$ different sequences with pairwise edit distances at least $2$.
  For $i=1, \ldots, n$, let $\kmer{t}^i$ be a sequence obtained from $\kmer{s}$
  by a single substitution at position $i$.
  If $i\neq j$, then $\kmer{t}^i$ differs from $\kmer{t}^j$ at two positions, namely $i$ and $j$.
  Then we must have $\textrm{edit}\left(\kmer{t}^i, \kmer{t}^j\right) \ge 2$ as $\kmer{t}^i$ cannot be
  transformed into $\kmer{t}^j$ with a single substitution or a single insertion or deletion.
  Hence, $\left\{\kmer{t}^1,\ldots, \kmer{t}^n\right\}$ forms the required set.
\end{proof}

\begin{Lemma}\label{lem:int-1-minimum}
  If $f: \S_n \to \P(B)$ is $(1,2)$-sensitive, then $|B|\geq n|\Sigma|^{n-1}$.
\end{Lemma}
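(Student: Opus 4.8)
The plan is to combine the per-sequence bound of Lemma~\ref{lem:int-1-fs-minimum} with an upper bound on how many sequences may share a single bucket, and then finish with a double-counting argument over the incidences between sequences and buckets.

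First I would determine which sequences are allowed to co-occur in a bucket. If a bucket $b$ lies in both $f(\kmer{s})$ and $f(\kmer{t})$, then $f(\kmer{s})\cap f(\kmer{t})\neq\emptyset$, so the contrapositive of LSB-property~\eqref{eq:lsb2} forces $\edit{s}{t}\le 1$. Hence, for each $b\in B$, the set $X_b:=\{\kmer{s}\in\S_n : b\in f(\kmer{s})\}$ consists of sequences that are pairwise within edit distance $1$.

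Second, I would bound $|X_b|$. For equal-length sequences, edit distance at most $1$ means they are either equal or differ in exactly one position, since a lone insertion or deletion changes the length. Using this I would argue that any family of length-$n$ sequences that is pairwise within edit distance $1$ must agree outside one common coordinate, and therefore has size at most $|\Sigma|$: pick two distinct members differing at position $i$; any third member that differed from one of them at some position $j\neq i$ would differ from the other at both $i$ and $j$, contradicting pairwise distance $\le 1$. So all members of $X_b$ share the same symbols on the $n-1$ positions other than $i$, leaving at most $|\Sigma|$ possibilities, i.e.\ $|X_b|\le|\Sigma|$.

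Finally, I would count incidences two ways: $\sum_{\kmer{s}\in\S_n}|f(\kmer{s})| = \sum_{b\in B}|X_b| \le |B|\cdot|\Sigma|$, while Lemma~\ref{lem:int-1-fs-minimum} gives $\sum_{\kmer{s}\in\S_n}|f(\kmer{s})| \ge n\cdot|\Sigma|^n$; together these yield $|B|\ge n|\Sigma|^{n-1}$. The main obstacle is the structural ``clique'' claim in the second step — that a pairwise-edit-distance-$\le 1$ family of equal-length strings is confined to varying a single coordinate; once that is in hand the remaining counting is routine, and I only need to be slightly careful that the distance-$1$ characterization truly excludes length-preserving distance-$1$ pairs other than single substitutions, which it does.
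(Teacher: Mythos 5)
Your proposal is correct and follows essentially the same route as the paper: both bound the number of sequences per bucket by $|\Sigma|$ via the observation that co-bucketed sequences must pairwise differ by a single substitution at one common position, and then double-count sequence--bucket incidences. The only cosmetic difference is that you invoke Lemma~\ref{lem:int-1-fs-minimum} for the lower bound $\sum_{\kmer{s}}|f(\kmer{s})|\ge n|\Sigma|^n$, whereas the paper re-derives that count inline from the size of $N_n^1(\kmer{s})$; both are valid.
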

\begin{proof}
  Consider the collection of pairs
  $H=\left\{(\kmer{s}, b) \,\middle|\, \kmer{s}\in \S_n\text{ and } b\in f(\kmer{s})\right\}$.
  We bound the size of $H$ from above and below.
  For an arbitrary sequence $\kmer{s}$, let $b\in f(\kmer{s})$ be a bucket that
  contains $\kmer{s}$.
  According to the LSB-property~\eqref{eq:lsb2}, any other sequence in $b$ has edit
  distance $1$ from $\kmer{s}$, \ie, a substitution.
  Suppose that the bucket $b$ contains two sequences $\kmer{u}$ and $\kmer{v}$ that are
  obtained from $\kmer{s}$ by a single substitution at different positions.
  Then $\edit{u}{v}=2$ and $f(\kmer{u})\cap f(\kmer{v})\neq\emptyset$,
  which contradicts the LSB-property~\eqref{eq:lsb2}.
  Therefore, all the sequences in $b$ can only differ from $\kmer{s}$
  at some fixed position $i$.
  There are $|\Sigma|$ such sequences (including $\kmer{s}$ itself).
  So each bucket $b\in B$ can appear in at most $|\Sigma|$ pairs in $H$.
  Thus $|H|\leq |\Sigma|\cdot|B|$.
  
  On the other hand, for a length-$n$ sequence $\kmer{s}$,
  its $1$-neighborhood $N^1_n(\kmer{s})$
  contains $n(|\Sigma|-1)$ other length-$n$ sequences,
  corresponding to the $|\Sigma|-1$ possible
  substitutions at each of the $n$ positions.
  The LSB-property~\eqref{eq:lsb1} requires that $\kmer{s}$ shares at least one bucket with each of them.
  As argued above, each bucket $b\in f(\kmer{s})$ can contain at most
  $|\Sigma|-1$ sequences other than $\kmer{s}$.
  Therefore, $\kmer{s}$ needs to appear in at least
  $n(|\Sigma|-1)/(|\Sigma|-1)=n$ different buckets,
  and hence at least $n$ pairs in $H$.
  So $|H|\geq n |\S_n|=n|\Sigma|^n$.
  Together, we have $|\Sigma|\cdot|B|\geq n |\Sigma|^n$, or
  $|B|\geq n |\Sigma|^{n-1}$.
\end{proof}

We now construct a bucketing function $f:\S_n\to \P(B)$ that is $(1,2)$-sensitive using the algorithm given below.
It has exponential running time with respect to $n$ but primarily serves as a constructive proof that $(1,2)$-sensitive bucketing functions exist.
Assign to the alphabet $\Sigma$ an arbitrary order
$\sigma:\{1, \ldots, |\Sigma|\}\to \Sigma$.
The following algorithm defines the function $f$:


\begin{algorithm}[H]
  \DontPrintSemicolon
  \lForEach{$\kmer{s} \in \S_n$}{$f(\kmer{s})=\emptyset$}
  $m\gets 1$ \quad\tcp{index of the smallest unused bucket}
  \ForEach(\quad\tcp*[h]{in an arbitrary order}){$\kmer{s}=s_1s_2\cdots s_n \in \S_n$}{
    \For{$i=1$ \KwTo $n$}{
      \If(\quad\tcp*[h]{$s_i$ is the smallest character in $\Sigma$}){$s_i==\sigma(1)$}{
        \For{$j=1$ \KwTo $|\Sigma|$}{
          $\kmer{t}\gets s_1\cdots s_{i-1}\sigma(j)s_{i+1}\cdots s_n$\;
          $f(\kmer{t})\gets f(\kmer{t})\cup \{m\}$ \quad\tcp{add $\kmer{t}$ to bucket $m$}
        }
        $m\gets m+1$\;
      }
    }
  }
\end{algorithm}
A toy example of the bucketing function $f$ with $n=2$ and
$\Sigma=\{\sigma(1)=\mathrm{A},\sigma(2)=\mathrm{C},
\sigma(3)=\mathrm{G},\sigma(4)=\mathrm{T}\}$ constructed
using the above algorithm~(where the sequences are processed in the
lexicographical order induced by $\sigma$) is given below, followed by
the contained sequences in the resulting buckets.

\begin{table*}[!ht]
  \centering
\begin{tabular}{llll}
  $f(\mathrm{AA})=\{1, 2\}$, &$f(\mathrm{AC})=\{2, 3\}$,
  &$f(\mathrm{AG})=\{2, 4\}$, &$f(\mathrm{AT})=\{2, 5\}$,\\
  $f(\mathrm{CA})=\{1, 6\}$, &$f(\mathrm{CC})=\{3, 6\}$,
  &$f(\mathrm{CG})=\{4, 6\}$, &$f(\mathrm{CT})=\{5, 6\}$,\\
  $f(\mathrm{GA})=\{1, 7\}$, &$f(\mathrm{GC})=\{3, 7\}$,
  &$f(\mathrm{GG})=\{4, 7\}$, &$f(\mathrm{GT})=\{5, 7\}$,\\
  $f(\mathrm{TA})=\{1, 8\}$, &$f(\mathrm{TC})=\{3, 8\}$,
  &$f(\mathrm{TG})=\{4, 8\}$,&$f(\mathrm{TT})=\{5, 8\}$.\\
\end{tabular}
\end{table*}


\begin{table*}[!ht]
  \centering
  \begin{tabular}{ccccc}
    bucket \# & sequences & & bucket \# & sequences\\
    \hline
    1 & AA, CA, GA, TA & &
    2 & AA, AC, AG, AT\\
    3 & AC, CC, GC, TC & &
    4 & AG, CG, GG, TG\\
    5 & AT, CT, GT, TT & &
    6 & CA, CC, CG, CT\\
    7 & GA, GC, GG, GT & &
    8 & TA, TC, TG, TT\\
  \end{tabular}
\end{table*}



\begin{Lemma}\label{lem:int-1-optimal}
  The constructed bucketing function $f:\S_n\to\P(B)$ satisfies:
  (i) each bucket contains $|\Sigma|$ sequences,
  (ii) $|f(\kmer{s})|=n$ for each $\kmer{s}\in\S_n$, and
  (iii) $|B|=n|\Sigma|^{n-1}$.
\end{Lemma}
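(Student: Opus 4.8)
The plan is to read off from the pseudocode exactly when each bucket is created and which sequences are placed into it, and then tie the three quantities together with one double-counting identity.

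First I would prove (i). Every pass through the innermost block allocates one fresh bucket with the current index $m$, inserts into it the sequences $s_1\cdots s_{i-1}\sigma(j)s_{i+1}\cdots s_n$ for $j=1,\ldots,|\Sigma|$, and then increments $m$. Since $\sigma$ is a bijection, these $|\Sigma|$ sequences are pairwise distinct; since $m$ is incremented immediately afterward, no index is ever reused and no bucket is modified after its creation. Hence each bucket contains exactly $|\Sigma|$ sequences.

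The heart of the argument is (ii). Fix $\kmer{s}=s_1\cdots s_n$; I would show that the buckets containing $\kmer{s}$ are in bijection with the $n$ positions. For existence, given a position $i$, let $\kmer{s}^{(i)}$ denote $\kmer{s}$ with its $i$-th character overwritten by $\sigma(1)$; when the algorithm processes $\kmer{s}^{(i)}$ and reaches position $i$, the test $s_i==\sigma(1)$ succeeds, a bucket is created, and the iteration with $\sigma(j)=s_i$ inserts $\kmer{s}$ into it. For injectivity, observe that any bucket holding $\kmer{s}$ was created while some sequence $\kmer{s}'$ was processed at some position $i'$ with $s'_{i'}=\sigma(1)$, and every sequence put in that bucket agrees with $\kmer{s}'$ at all coordinates except $i'$; thus $\kmer{s}$ together with $i'$ determines $\kmer{s}'=\kmer{s}^{(i')}$, so two distinct buckets containing $\kmer{s}$ must come from distinct positions (a given sequence is processed once, and its position-$i'$ bucket is created once). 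The map sending a bucket containing $\kmer{s}$ to the position at which it was created is therefore an injection into $\{1,\ldots,n\}$ which is onto by the existence step, so $|f(\kmer{s})|=n$.

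Finally, (iii) follows from (i) and (ii) by counting the incidences $H=\{(\kmer{s},b):b\in f(\kmer{s})\}$ two ways: grouping by bucket gives $|H|=|\Sigma|\cdot|B|$, grouping by sequence gives $|H|=n\cdot|\S_n|=n|\Sigma|^n$, hence $|B|=n|\Sigma|^{n-1}$. (Equivalently, one can count bucket-creation events directly: a bucket is born once for each pair $(\kmer{s},i)$ with $s_i=\sigma(1)$, and for each of the $n$ choices of $i$ there are $|\Sigma|^{n-1}$ such $\kmer{s}$.) The main obstacle will be the injectivity half of (ii): recognizing that a bucket implicitly records the single coordinate along which its sequences vary, which is what prevents the $n$ buckets of $\kmer{s}$ from collapsing onto fewer indices. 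The rest is direct bookkeeping from the algorithm.
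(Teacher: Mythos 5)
Your proof is correct and follows essentially the same route as the paper's: read the bucket contents off the innermost loop for (i), identify the $n$ buckets of $\kmer{s}$ with the $n$ sequences $\kmer{s}^{(i)}$ obtained by writing $\sigma(1)$ at each position for (ii), and count incidences (equivalently, occurrences of $\sigma(1)$ across $\S_n$) for (iii). Your explicit injectivity argument in (ii) — that a bucket determines the position $i'$ and hence the creating sequence $\kmer{s}^{(i')}$ — is a detail the paper leaves implicit, and is a welcome addition.
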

\begin{proof}
  Claim~(i) follows directly from the construction~(the most inner for-loop).
  In the algorithm, each sequence $\kmer{s} \in \S_n$ is added to $n$ different
  buckets, one for each position.
  Specifically, let $\kmer{s} = s_1s_2\cdots s_n$,
  then $\kmer{s}$ is added to a new bucket when we process the sequence
  $\kmer{s}^i = s_1 s_2 \cdots s_{i-1} \sigma(1) s_{i+1} \cdots s_n$,
  $1 \le i \le n$.
  Hence, $|f(\kmer{s})|=n$.
  To calculate $|B|$, observe that a new bucket is used whenever we encounter the smallest
  character $\sigma(1)$ in some sequence $\kmer{s}$.
  So $|B|$ is the same as the number of occurrences of $\sigma(1)$
  among all sequences in $\S_n$.
  The total number of characters in $\S_n$ is $n|\Sigma|^n$.
  By symmetry, $\sigma(1)$ appears $n|\Sigma|^{n-1}$ times.
\end{proof}

\begin{Lemma}\label{lem:int-1-sensitive}
  The constructed bucketing function $f$ is $(1,2)$-sensitive.
\end{Lemma}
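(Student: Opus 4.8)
The plan is to verify the two LSB-properties separately. For the sensitivity property~\eqref{eq:lsb1}, I need to show that any two length-$n$ sequences $\kmer{s}$ and $\kmer{t}$ with $\edit{s}{t}\le 1$ share a bucket under $f$. If $\kmer{s}=\kmer{t}$ this is trivial since $f(\kmer{s})\ne\emptyset$ by Lemma~\ref{lem:int-1-optimal}(ii) (or by the general argument in Section~\ref{sec:prelim}). Otherwise $\edit{s}{t}=1$, and since both have the same length $n$, the single edit must be a substitution: $\kmer{t}$ differs from $\kmer{s}$ at exactly one position $i$. Now I look at the character $\kmer{s}$ and $\kmer{t}$ have in common at all positions other than $i$, and in particular I want to find the bucket created while processing the ``canonical'' sequence $\kmer{s}^i$ that agrees with $\kmer{s}$ (equivalently $\kmer{t}$) everywhere except position $i$, where it carries $\sigma(1)$. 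When the algorithm processes $\kmer{s}^i$, it reaches position $i$, sees $\sigma(1)$ there, and in the innermost loop adds \emph{all} $|\Sigma|$ sequences obtained by substituting position $i$ with each character of $\Sigma$ into the same new bucket $m$. Both $\kmer{s}$ and $\kmer{t}$ are among these $|\Sigma|$ sequences (they differ from $\kmer{s}^i$ only at position $i$), so they are both placed in bucket $m$; hence $m\in f(\kmer{s})\cap f(\kmer{t})$. The one subtlety to state carefully is that this new bucket is created exactly once, when $\kmer{s}^i$ is processed, regardless of the arbitrary processing order, since $\kmer{s}^i$ has $\sigma(1)$ at position $i$ and triggers the \texttt{if} branch there.

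For the specificity property~\eqref{eq:lsb2}, I must show that if $\edit{s}{t}\ge 2$ then $f(\kmer{s})\cap f(\kmer{t})=\emptyset$. Equivalently (contrapositive), I show that if $\kmer{s}$ and $\kmer{t}$ share a bucket then $\edit{s}{t}\le 1$. Suppose $b\in f(\kmer{s})\cap f(\kmer{t})$. By the construction, bucket $b$ was created at a single moment, while processing some sequence $\kmer{r}=\kmer{s}^i_{(\kmer{r})}$ having $\sigma(1)$ at some position $i$, and the full content of $b$ is exactly the set $\{\,r_1\cdots r_{i-1}\,c\,r_{i+1}\cdots r_n : c\in\Sigma\,\}$ of the $|\Sigma|$ sequences obtained by varying position $i$ of $\kmer{r}$. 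Both $\kmer{s}$ and $\kmer{t}$ lie in this set, so each agrees with $\kmer{r}$ at every position except possibly $i$; therefore $\kmer{s}$ and $\kmer{t}$ agree with each other at every position except possibly $i$, which gives $\edit{s}{t}\le 1$. This is essentially the same observation already used in the proof of Lemma~\ref{lem:int-1-minimum} (each bucket's sequences differ only at one fixed position), so I can borrow that structure directly.

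I do not expect a serious obstacle here; the construction was designed precisely so that buckets correspond to ``substitution-stars'' at a fixed position, and both directions follow from spelling out what a bucket contains. The main point requiring care — and the step I would write most explicitly — is the claim that each bucket is created exactly once and that its final content is precisely the $|\Sigma|$ sequences enumerated in the innermost loop at creation time; this relies on the fact that once a bucket index $m$ is incremented it is never revisited, so no sequence is ever added to $b$ after creation. With that lemma-within-the-proof in hand, both LSB-properties are immediate, and together with Lemma~\ref{lem:int-1-optimal} this shows $f$ is an optimal $(1,2)$-sensitive bucketing function.
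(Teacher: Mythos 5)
Your proposal is correct and follows essentially the same route as the paper: the forward direction identifies the canonical sequence with $\sigma(1)$ at the differing position whose processing places both $\kmer{s}$ and $\kmer{t}$ in one bucket, and the backward direction uses the fact that each bucket's contents form a substitution-star at a single fixed position. Your extra care in noting that each bucket is populated exactly once at creation time is a reasonable explicit justification of what the paper states more tersely, but it is not a different argument.
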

\begin{proof}
  We show that for $\kmer{s}, \kmer{t}\in \S_n$,
  $\edit{s}{t}\leq 1$ if and only if $f(\kmer{s})\cap f(\kmer{t})\neq\emptyset$.
  For the forward direction, $\edit{s}{t} \le 1$ implies that $\kmer{s}$ and $\kmer{t}$
  can differ by at most one substitution at some position $i$.
  Let $\kmer{r}$ be the sequence that is identical to $\kmer{s}$ except
  at the $i$-th position where it is substituted by $\sigma(1)$
  (it is possible that $\kmer{r} = \kmer{s}$).
  According to the algorithm, when processing $\kmer{r}$, both $\kmer{s}$ and $\kmer{t}$
  are added to a same bucket $m$.
  Therefore, $m\in f(\kmer{s})\cap f(\kmer{t})$.
  
  For the backward direction,
  let $m$ be an integer from $f(\kmer{s})\cap f(\kmer{t})$.
  By construction, all the $|\Sigma|$ sequences in the bucket $m$ differ by a single
  substitution.
  Hence, $\edit{s}{t}\leq 1$.
\end{proof}

Combining Lemmas~\ref{lem:int-1-fs-minimum}--\ref{lem:int-1-sensitive},
we have shown that the above $(1,2)$-sensitive bucketing function
is optimal in the sense of minimizing $|B|$ and $|f(\kmer{s})|$. 
This is summarized below.

\begin{theorem}\label{thm:ints-1}
  Let $B=\{1, \ldots, n |\Sigma|^{n-1}\}$, there is a
  $(1, 2)$-sensitive bucketing function $f:\S_n \to \P(B)$ with $|f(\kmer{s})| = n$ for each $\kmer{s} \in \S_n$.
  No $(1,2)$-sensitive bucketing function exists if $|B|$ is smaller or $|f(\kmer{s})|< n$ for some sequence $\kmer{s}\in \S_n$.
\end{theorem}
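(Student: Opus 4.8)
The plan is to assemble Theorem~\ref{thm:ints-1} directly from the lemmas already proved, since the theorem is essentially a packaging statement. First I would invoke Lemmas~\ref{lem:int-1-optimal} and \ref{lem:int-1-sensitive} to establish the positive part: the explicit algorithm produces a bucketing function $f:\S_n\to\P(B)$ with $|B| = n|\Sigma|^{n-1}$ and $|f(\kmer{s})| = n$ for every $\kmer{s}\in\S_n$, and this $f$ is $(1,2)$-sensitive. So setting $B = \{1,\ldots,n|\Sigma|^{n-1}\}$ as in the statement, the claimed function exists.

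Next I would handle the two optimality (non-existence) claims. For the bucket-count claim, suppose a $(1,2)$-sensitive bucketing function existed with a smaller bucket set, i.e. $|B| < n|\Sigma|^{n-1}$; this directly contradicts Lemma~\ref{lem:int-1-minimum}, which asserts $|B| \geq n|\Sigma|^{n-1}$ for any $(1,2)$-sensitive $f$. For the per-sequence claim, suppose some $(1,2)$-sensitive $f$ had $|f(\kmer{s})| < n$ for some $\kmer{s}\in\S_n$; this contradicts Lemma~\ref{lem:int-1-fs-minimum}, which gives $|f(\kmer{s})| \geq n$ for every sequence. Combining these three pieces yields the theorem.

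There is essentially no obstacle here — all the real work (the lower bound counting argument via the incidence set $H$, the independent-set construction of $n$ pairwise-distance-$\geq 2$ neighbors, and the verification that the algorithmic $f$ meets the bounds and is sensitive) is done in the preceding lemmas. The only thing to be slightly careful about is phrasing: the theorem says ``no $(1,2)$-sensitive bucketing function exists if $|B|$ is smaller \emph{or} $|f(\kmer{s})|<n$ for some $\kmer{s}$,'' so I would present the two impossibility cases separately and note that they are independent obstructions, each ruled out by its own lemma. I would also remark that the two bounds are simultaneously achieved by the single constructed $f$, so the word ``optimal'' is justified in the strong sense of meeting both lower bounds at once.

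\begin{proof}
  The existence part follows from Lemmas~\ref{lem:int-1-optimal} and \ref{lem:int-1-sensitive}: with $B = \{1,\ldots,n|\Sigma|^{n-1}\}$, the bucketing function $f$ produced by the algorithm is $(1,2)$-sensitive and satisfies $|f(\kmer{s})| = n$ for every $\kmer{s}\in\S_n$.
  For optimality, suppose $g:\S_n\to\P(B')$ is any $(1,2)$-sensitive bucketing function.
  By Lemma~\ref{lem:int-1-minimum}, $|B'| \geq n|\Sigma|^{n-1}$, so no such $g$ exists when $|B'|$ is strictly smaller.
  By Lemma~\ref{lem:int-1-fs-minimum}, $|g(\kmer{s})| \geq n$ for every $\kmer{s}\in\S_n$, so no such $g$ exists if $|g(\kmer{s})| < n$ for some sequence $\kmer{s}$.
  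Since the constructed $f$ meets both lower bounds simultaneously, it is optimal in the stated sense.
\end{proof}
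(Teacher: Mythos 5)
Your proposal is correct and matches the paper exactly: the theorem is stated there as a summary of Lemmas~\ref{lem:int-1-fs-minimum}--\ref{lem:int-1-sensitive}, with the construction supplying existence and the two lower-bound lemmas supplying the two impossibility claims. Nothing further is needed.
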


\section{Mapping to sequences of length \texorpdfstring{$n$}{n}}\label{sec:sampling}
We continue to explore LSB functions with different values of $d_1$ and $d_2$.
Here we focus on a special case where $B\subset \S_n$,
namely, each bucket in $B$ is labeled by a length-$n$ sequence.
The idea of designing such LSB functions is to map a sequence $\kmer{s}$
to its neighboring sequences that are in $B$.
Formally, given a subset $B\subset\S_n$ and an integer $r\geq 1$, 
we define the bucketing function $f^r_B:\S_n\to \P(B)$ by
\[
f^r_B(\kmer{s})= N_n^r(\kmer{s}) \cap B
= \left\{\kmer{v}\in B\,\middle|\, \edit{s}{v}\leq r\right\}
\text{ for each } \kmer{s}\in\S_n.
\]

We now derive the conditions for $f^r_B$ to be an LSB function.
For any sequence $\kmer{s}$, all the buckets in $f^r_B(\kmer{s})$
are labeled by its neighboring sequences within radius $r$.
Therefore, if two sequences $\kmer{s}$ and $\kmer{t}$ share a bucket
labeled by $\kmer{v}$, then $\edit{s}{v}\leq r$ and $\edit{t}{v}\leq r$.
Recall that $\S_n$ is a metric space, in particular, the
triangle inequality holds.
So $\edit{s}{t}\leq \edit{s}{v}+\edit{t}{v}\leq 2r$.
In other words, if $\kmer{s}$ and $\kmer{t}$ 
are $2r + 1$ edits apart, then they will be mapped to disjoint buckets.
Formally, if $\edit{s}{t} \ge 2r + 1$, then $f^r_B(\kmer{s}) \cap f^r_B(\kmer{t}) = \emptyset$.
This implies that $f^r_B$ satisfies the LSB-property~\eqref{eq:lsb2}
with $d_2 = 2r + 1$.
We note that this statement holds regardless of the choice of $B$.

Hence, to make $f^r_B$ a $(d_1, 2r+1)$-sensitive bucketing function for some
integer $d_1$, we only need to determine a subset $B$ so that $f^r_B$
satisfies the LSB-property~\eqref{eq:lsb1}.
Specifically, $B$ should be picked such that 
for any two length-$n$ sequences $\kmer{s}$ and $\kmer{t}$ within an edit distance of $d_1$,
we always have
\[
f_B^r(\kmer{s}) \cap f_B^r(\kmer{t})
= \left(N_n^r(\kmer{s})\cap B\right) \cap \left(N_n^r(\kmer{t}) \cap B\right)
= N_n^r(\kmer{s}) \cap N_n^r(\kmer{t}) \cap B \neq\emptyset.
\]
For the sake of simplicity, we say a set of buckets $B\subset \S_n$ is \emph{$(d_1,r)$-guaranteed}
if and only if $N_n^r(\kmer{s}) \cap N_n^r(\kmer{t}) \cap B \neq\emptyset$ for every pair of sequences $\kmer{s}$ and $\kmer{t}$ with $\edit{s}{t} \le d_1$.
Equivalently, following the above arguments, $B$ is $(d_1,r)$-guaranteed if and only if
the corresponding bucketing function $f_B^r$ is $(d_1, 2r+1)$-sensitive. 
Note that the $(d_1, r)$-guaranteed set is not a new concept,
but rather an abbreviation to avoid repeating the long phrase
``a set whose corresponding bucketing function is $(d_1, 2r+1)$-sensitive''.
In the following sections, we show several $(d_1, r)$-guaranteed
subsets $B\subset\S_n$ for different values of $d_1$.

\subsection{\texorpdfstring{$(2r,r)$}{(2r,r)}-guaranteed and \texorpdfstring{$(2r - 1, r)$}{(2r-1, r)}-guaranteed subsets}

We first consider an extreme case where $B = \S_n$.
\begin{Lemma}\label{lem:int-1-injective}
  Let $B = \S_n$. 
  Then ${B}$ is $(2r,r)$-guaranteed if $r$ is even, 
  and ${B}$ is $(2r-1,r)$-guaranteed if $r$ is odd.  
  \end{Lemma}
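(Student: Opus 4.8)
The plan is to show that when $\edit{s}{t}\le d_1$ (with $d_1=2r$ for even $r$, and $d_1=2r-1$ for odd $r$), the set $N_n^r(\kmer{s})\cap N_n^r(\kmer{t})$ is nonempty; since $B=\S_n$, intersecting with $B$ changes nothing, so this is exactly the $(d_1,r)$-guaranteed condition. The natural candidate for a common neighbor is a ``midpoint'' sequence $\kmer{m}$ lying on a shortest edit-distance path from $\kmer{s}$ to $\kmer{t}$. Concretely, I would fix an optimal alignment witnessing $\edit{s}{t}=D\le d_1$, which expresses the transformation from $\kmer{s}$ to $\kmer{t}$ as a sequence of $D$ unit edits; taking the sequence obtained after applying the first $\lceil D/2\rceil$ of these edits gives $\kmer{m}$ with $\edit{s}{m}\le\lceil D/2\rceil$ and $\edit{m}{t}\le\lfloor D/2\rfloor$. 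The one subtlety is that an intermediate sequence along an alignment path need not have length exactly $n$ (insertions and deletions change the length), so $\kmer{m}$ may not literally be in $\S_n$; I would need to argue that the edits can be reordered, or the midpoint chosen, so that the length returns to $n$ — for instance by pairing an insertion with a later deletion, or by noting that when $\edit{s}{t}\le 2r$ one can always find a length-$n$ sequence within distance $r$ of both (this is essentially asking that $\S_n$ has no ``holes'' at this scale).

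Given such a length-$n$ midpoint $\kmer{m}$, the key arithmetic is: if $D\le 2r-1$ (odd-$r$ case, $D$ odd or even), then $\lceil D/2\rceil\le r$, so $\kmer{m}\in N_n^r(\kmer{s})\cap N_n^r(\kmer{t})$ directly. If $D= 2r$ (even-$r$ case), then $\lceil D/2\rceil = r$ as well, so the same bound applies. So at the level of distances the claim is immediate once the midpoint exists in $\S_n$; the parity of $r$ must therefore enter precisely through the length-correction argument — I expect that when $r$ is even one has more ``room'' (an even number of indel operations can be balanced into a net-zero length change while staying within radius $r$ on each side), whereas when $r$ is odd balancing forces one to absorb the leftover operation, costing the extra unit and reducing the guaranteed $d_1$ from $2r$ to $2r-1$.

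The main obstacle, then, is not the triangle-inequality bookkeeping but making the midpoint construction produce a genuine element of $\S_n$ while controlling both one-sided distances — essentially a careful case analysis on the multiset of edit operations (how many insertions, deletions, substitutions) in the optimal alignment, split according to the parity of $D$ and of $r$. I would handle it by: (1) writing $D = \edit{s}{t}$ and listing the operation types along an optimal alignment; (2) showing that one may choose a prefix of these operations, possibly after reordering indels so an insertion is cancelled by a deletion, so that the resulting string has length $n$; (3) bounding the number of operations in this prefix and in its complement by $r$, where the parity of $r$ dictates whether the tight bound $2r$ or the weaker $2r-1$ on $D$ is needed; and (4) concluding via Lemma~\ref{lem:ind-set2}'s companion reasoning that $B=\S_n$ is $(d_1,r)$-guaranteed, hence $f^r_{\S_n}$ is $(d_1,2r+1)$-sensitive. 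A fallback, if the length-correction turns out delicate, is to exhibit the common neighbor explicitly by a substitution-only or indel-only argument in the special sub-cases (e.g., when the optimal alignment uses only substitutions, any sequence agreeing with $\kmer{s}$ on the first half of the differing positions and with $\kmer{t}$ on the second half works and is automatically in $\S_n$), and then reduce the general case to these.
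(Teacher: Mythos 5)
Your proposal matches the paper's proof: the paper likewise takes an edit script from $\kmer{s}$ to $\kmer{t}$ (padded with trivial substitutions to exactly $2r$ or $2r-1$ edits), reorders it so that insertions and deletions appear in adjacent cancelling pairs placed before all substitutions, and applies the even-length prefix of $r$ edits (for $r$ even) or $r-1$ edits (for $r$ odd) to obtain a length-$n$ midpoint within distance $r$ of both endpoints. The one detail your sketch leaves implicit --- that the prefix length must be \emph{even} so as not to split an indel pair, which is precisely where the parity of $r$ enters and why the odd case drops from $2r$ to $2r-1$ --- is exactly the resolution you anticipated in your steps (2)--(3), so the plan goes through as written.
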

\begin{proof}
  First consider the case that $r$ is even.
  Let $\kmer{s}$ and $\kmer{t}$ be two length-$n$ sequences
  with $\edit{s}{t}\leq 2r$.
  Then there are $2r$ edits that transforms $\kmer{s}$ to
  $\kmer{t}$. (If $\edit{s}{t}<2r$, we can add in trivial edits that
  substitute a character with itself.)
  Because $\kmer{s}$ and $\kmer{t}$ have the same length, these $2r$ edits
  must contain the same number of insertions and deletions.
  Reorder the edits so that each insertion is followed immediately by
  a deletion (\ie, a pair of indels) and all the indels come
  before substitutions.
  Because $r$ is even, in this new order, the first $r$ edits contain an
  equal number of insertions and deletions.
  Namely, applying the first $r$ edits on $\kmer{s}$ produces
  a length-$n$ sequence $\kmer{v}$.
  Clearly, $\edit{s}{v}\leq r$ and $\edit{t}{v}\leq r$,
  \ie, $\kmer{v}\in N_n^r(\kmer{s})\cap N_n^r(\kmer{t})=
  N_n^r(\kmer{s})\cap N_n^r(\kmer{t})\cap B$.

  For the case that $r$ is odd.
  Let $\kmer{s}$ and $\kmer{t}$ be two length-$n$ sequences
  with $\edit{s}{t}\leq 2r-1$.
  By the same argument as above, $\kmer{s}$ can be transformed to $\kmer{t}$
  by $2r-1$ edits and we can assume that all the indels appear
  in pairs and they come before all the substitutions.
  Because $r$ is odd, $r-1$ is even.
  So applying the first $r-1$ edits on $\kmer{s}$ produces a length-$n$
  sequence $\kmer{v}$ such that $\edit{s}{v}\leq r-1<r$ and $\edit{t}{v}\leq
  2r-1-(r-1)=r$.
  Therefore, $\kmer{v}\in N_n^r(\kmer{s})\cap N_n^r(\kmer{t})=
  N_n^r(\kmer{s})\cap N_n^r(\kmer{t})\cap B$.
\end{proof}


By definition, setting $B=\S_n$ makes $f^r_B$ $(2r, 2r+1)$-sensitive if
$r$ is even and $(2r-1, 2r+1)$-sensitive if $r$ is odd.
This provides nearly optimal bucketing performance
in the sense that there is no gap~(when $r$ is even) or the gap is just one~(when $r$ is odd).
It is evident from the proof that the gap at $2r$ indeed exists when $r$ is odd
because if $\kmer{s}$ can only be transformed to $\kmer{t}$ by
$r$ pairs of indels, then there is no length-$n$ sequence $\kmer{v}$ with
$\edit{s}{v}=\edit{t}{v}=r$.

\subsection{Properties of \texorpdfstring{$(r,r)$}{(r,r)}-guaranteed subsets}\label{sec:r-guarantee}
In the above section all sequences in $\S_n$ are used as buckets.
A natural question is, can we use a proper subset of $\S_n$
to achieve (gapped) LSB functions?
This can be viewed as down-sampling $\S_n$ such that if two length-$n$
sequences $\kmer{s}$ and $\kmer{t}$ are similar,
then a length-$n$ sequence is always sampled from
their common neighborhood $N_n^r(\kmer{s})\cap N_n^r(\kmer{t})$. 

Here we focus on the case that $d_1 = r$, \ie,
we aim to construct $B$ that is $(r,r)$-guaranteed. 
Recall that this means for any $\kmer{s}, \kmer{t}\in \S_n$ with
$\edit{s}{t}\leq r$, we have $N_n^r(\kmer{s})\cap N_n^r(\kmer{t})\cap B\neq \emptyset$.
In other words, $f^r_B$ is $(r, 2r+1)$-sensitive.
To prepare the construction,
we first investigate some structural properties of $(r,r)$-guaranteed subsets.
We propose a conjecture that such sets form a
hierarchical structure with decreasing $r$:
\begin{Conjecture}
  If $B\subset \S_n$ is $(r,r)$-guaranteed, then $B$ is also $(r+1,r+1)$-guaranteed.
\end{Conjecture}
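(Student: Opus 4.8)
\medskip\noindent\emph{Proof plan.}
The plan is to reduce to the boundary case $\edit{s}{t}=r+1$ and then split on the structure of an optimal alignment of $\kmer{s}$ and $\kmer{t}$. Suppose $B\subset\S_n$ is $(r,r)$-guaranteed and take $\kmer{s},\kmer{t}\in\S_n$ with $\edit{s}{t}\le r+1$; the goal is to find a sequence in $N_n^{r+1}(\kmer{s})\cap N_n^{r+1}(\kmer{t})\cap B$. If $\edit{s}{t}\le r$ the hypothesis already supplies one, since $N_n^{r}(\kmer{s})\cap N_n^{r}(\kmer{t})\cap B\neq\emptyset$ and $N_n^{r}(\kmer{x})\subseteq N_n^{r+1}(\kmer{x})$ for every $\kmer{x}$. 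So assume $\edit{s}{t}=r+1$; any optimal alignment uses equal numbers $p$ of insertions and deletions together with $q$ substitutions, where $2p+q=r+1$, and I would distinguish two cases according to whether some such alignment has a substitution.

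\emph{Case: some optimal alignment has $q\ge 1$.} This is automatic when $r$ is even, since then $2p+q=r+1$ is odd and hence $q\ge 1$. Fix such an alignment, pick one of its substitution columns, and let $\kmer{t}'\in\S_n$ be $\kmer{t}$ with that single character reverted to the corresponding character of $\kmer{s}$. Then $\edit{t}{t'}\le 1$, while turning that column into a match shows $\edit{s}{t'}\le 2p+(q-1)=r$. Applying the $(r,r)$-guarantee to the pair $(\kmer{s},\kmer{t}')$ yields $\kmer{v}\in N_n^{r}(\kmer{s})\cap N_n^{r}(\kmer{t}')\cap B$; then $\edit{s}{v}\le r\le r+1$ and, by the triangle inequality, $\edit{t}{v}\le\edit{t}{t'}+\edit{t'}{v}\le 1+r=r+1$, so $\kmer{v}$ is a valid witness. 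In particular this settles the conjecture for every even $r$.

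\emph{Case: every optimal alignment has $q=0$.} This forces $r$ odd and $2p=r+1$, so that $\kmer{s}$ becomes $\kmer{t}$ through exactly $p=(r+1)/2$ insertions and $p$ deletions with no substitution; this is the step I expect to be the main obstacle. Now there is no length-preserving edit to revert, there need be no length-$n$ sequence ``halfway'' between $\kmer{s}$ and $\kmer{t}$, and there need be no pair $\kmer{u},\kmer{w}\in\S_n$ with $\edit{s}{u}\le 1$, $\edit{t}{w}\le 1$, and $\edit{u}{w}\le r$ to feed into the hypothesis, so every reduction used above fails. The smallest instance is $r=1$, $\Sigma=\{\mathrm A,\mathrm B\}$, $n=4$: the pair $\kmer{s}=\mathrm{ABAB}$, $\kmer{t}=\mathrm{BABA}$ has $\edit{s}{t}=2$ realized only by one insertion plus one deletion, and $N_n^{1}(\kmer{s})\cap N_n^{1}(\kmer{t})=\emptyset$ --- the ``missing midpoint'' phenomenon already noted after Lemma~\ref{lem:int-1-injective}, which reflects that the edit distance restricted to $\S_n$ is not geodesic, so one cannot slide along a shortest transformation while staying inside $\S_n$.

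To attack this case I would try one of: (a) a direct combinatorial description, via alignment arguments, of $N_n^{r}(\kmer{s})\cap N_n^{r}(\kmer{t})$ for indel-only-separated pairs, aiming to exhibit inside it a whole common neighborhood $N_n^{r}(\kmer{u})\cap N_n^{r}(\kmer{w})$ with $\edit{u}{w}\le r$, which $B$ must then intersect by hypothesis; (b) a contradiction argument, in which a $(r,r)$-guaranteed $B$ missing $N_n^{r+1}(\kmer{s})\cap N_n^{r+1}(\kmer{t})$ is forced to meet many overlapping smaller common neighborhoods, ruled out by a covering or parity count on how those can be placed; or, as a fallback, (c) proving the statement only for the explicit $(r,r)$-guaranteed families constructed in Section~\ref{sec:r-guarantee} and the sections after it, rather than for an arbitrary $B$. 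Before attempting a general proof, I would verify the conjecture computationally for small $n$, $r$, and $|\Sigma|$, both to gain confidence and to search for the structural invariant that makes the indel-only case go through.
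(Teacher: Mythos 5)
First, note that the statement in question is presented in the paper as an open \emph{conjecture}: the authors do not prove it, and instead establish only the weaker Lemma~\ref{lem:r-implies-rplus2}, that an $(r,r)$-guaranteed set is $(r+2,r+2)$-guaranteed, precisely because the step from $r$ to $r+1$ resists the kind of argument you outline. Your plan is the natural one-edit analogue of their two-edit proof: when $\edit{s}{t}=r+1$ is realized by an alignment containing a substitution, reverting that substitution yields $\kmer{t}'$ with $\edit{t}{t'}\le 1$ and $\edit{s}{t'}\le 2p+(q-1)=r$, and the $(r,r)$-guarantee applied to $(\kmer{s},\kmer{t}')$ plus the triangle inequality finishes. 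That part is correct, and since $r$ even forces $2p+q=r+1$ to be odd and hence $q\ge 1$, you do obtain a genuine partial result that goes beyond what the paper states: the conjecture holds for all even $r$.

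However, the proposal is not a proof of the conjecture. The entire difficulty sits in the case you explicitly leave open: $r$ odd and $\edit{s}{t}=r+1$ achievable only by $(r+1)/2$ insertion--deletion pairs. Your own example ($\kmer{s}=\mathrm{ABAB}$, $\kmer{t}=\mathrm{BABA}$ over a binary alphabet, where $N_4^1(\kmer{s})\cap N_4^1(\kmer{t})=\emptyset$) shows exactly why: there may be no sequence of length $n$ within distance $1$ of $\kmer{t}$ and distance $r$ of $\kmer{s}$, so there is no pair to which the $(r,r)$-guarantee can be applied, and every reduction of the form ``peel off one edit'' fails --- this is the same parity obstruction that limits the paper to skipping a \emph{pair} of indels in Lemma~\ref{lem:r-implies-rplus2} and hence to the increment $r\mapsto r+2$. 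The three fallback strategies (a)--(c) are research programmes rather than arguments; none is carried out, and (c) in particular would establish a strictly weaker statement about the specific constructed sets $B_n^i$ rather than an arbitrary $(r,r)$-guaranteed $B$. In summary, what you have is a correct reduction of the conjecture to its indel-only, odd-$r$ core, together with a valid proof for even $r$; this is real progress, but the conjecture remains open exactly where the paper leaves it.
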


We prove a weaker statement:
\begin{Lemma}\label{lem:r-implies-rplus2}
  If $B\subset \S_n$ is $(r,r)$-guaranteed, then $B$ is $(r+2,r+2)$-guaranteed.
\end{Lemma}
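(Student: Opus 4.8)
The plan is to prove the statement by reducing a pair of sequences at distance $\le r+2$ to a pair at distance $\le r$, finding the guaranteed common neighbor there, and then ``transporting'' it back. Let $\kmer{s},\kmer{t}\in\S_n$ with $\edit{s}{t}\le r+2$. As in the proof of Lemma~\ref{lem:int-1-injective}, write out an optimal alignment realizing $\edit{s}{t}\le r+2$ as a sequence of at most $r+2$ edits transforming $\kmer{s}$ into $\kmer{t}$, pad with trivial substitutions if necessary so that there are exactly $r+2$ edits, and reorder so that insertions and deletions appear in consecutive indel-pairs, all preceding the substitutions. The key observation is that since $\kmer{s}$ and $\kmer{t}$ have equal length, the indels pair up, so the first two edits form either an indel-pair or two substitutions; in both cases applying those first two edits to $\kmer{s}$ yields a length-$n$ sequence $\kmer{s}'$ with $\edit{s}{s'}\le 2$ and $\edit{s'}{t}\le r$.

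Now $B$ is $(r,r)$-guaranteed, so applied to the pair $(\kmer{s}',\kmer{t})$ it gives a sequence $\kmer{v}\in B$ with $\edit{s'}{v}\le r$ and $\edit{t}{v}\le r$. Unfortunately this only bounds $\edit{s}{v}\le \edit{s}{s'}+\edit{s'}{v}\le r+2$, which is too weak — we need $\edit{s}{v}\le r+2$ from the $\kmer{s}$ side but it is already automatic, so in fact we need $\edit{s}{v}\le r+2$ AND $\edit{t}{v}\le r+2$. Wait — re-examining the definition, $(r+2,r+2)$-guaranteed requires $\kmer{v}\in N_n^{r+2}(\kmer{s})\cap N_n^{r+2}(\kmer{t})$, and $\edit{t}{v}\le r\le r+2$ is fine while $\edit{s}{v}\le r+2$ is what we must check. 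So the naive bound actually suffices for the $\kmer{t}$ side but gives exactly $r+2$ on the $\kmer{s}$ side, which is acceptable. However I should double-check the triangle estimate is tight enough and, more carefully, handle the symmetric reduction: we could instead peel two edits off the $\kmer{t}$ end, giving $\kmer{t}'$ with $\edit{t}{t'}\le 2$, $\edit{s}{t'}\le r$, and get $\kmer{v}\in N_n^r(\kmer{s})\cap N_n^r(\kmer{t}')\subset N_n^r(\kmer{s})\cap N_n^{r+2}(\kmer{t})\cap B$. This cleanly yields $\edit{s}{v}\le r\le r+2$ and $\edit{t}{v}\le r+2$, so $\kmer{v}\in N_n^{r+2}(\kmer{s})\cap N_n^{r+2}(\kmer{t})\cap B$, as required.

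The key steps, in order: (1) express $\edit{s}{t}\le r+2$ as exactly $r+2$ edits, normalized so indels come in consecutive pairs before all substitutions; (2) argue the first two edits in this list are either an indel-pair or two substitutions, hence applying them produces a length-$n$ sequence $\kmer{t}'$ (peeling from the side so that $\edit{s}{t'}\le r$ and $\edit{t'}{t}\le 2$); (3) invoke the $(r,r)$-guarantee on $(\kmer{s},\kmer{t}')$ to obtain $\kmer{v}\in B$ with $\edit{s}{v}\le r$ and $\edit{t'}{v}\le r$; (4) conclude via the triangle inequality that $\edit{t}{v}\le \edit{t}{t'}+\edit{t'}{v}\le 2 + r$, so $\kmer{v}\in N_n^{r+2}(\kmer{s})\cap N_n^{r+2}(\kmer{t})\cap B$.

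The main obstacle is step (2): justifying that after the indels-in-pairs-first normalization, truncating the first two edits genuinely leaves a length-$n$ intermediate sequence and that the remaining $r$ edits still transform it into $\kmer{t}$. This is the same subtlety handled in Lemma~\ref{lem:int-1-injective} — one must be careful that the ``same number of insertions and deletions'' argument applies to the chosen prefix. The cleanest way is to note that the edit script, read left to right, defines a walk; pairing each insertion with the next deletion and moving all indel-pairs to the front is a rewriting that does not change the total edit count (each elementary swap of adjacent edits is cost-preserving and validity-preserving up to relabeling positions), and a prefix consisting of one full indel-pair, or of two substitutions, keeps the length at $n$. I expect the rest of the argument to be routine triangle-inequality bookkeeping.
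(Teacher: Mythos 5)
Your proof is correct and is essentially the paper's own argument: remove a pair of indels or two substitutions from an edit script for $(\kmer{s},\kmer{t})$ to obtain a length-$n$ intermediate $\kmer{t}'$ with $\edit{s}{t'}\le r$ and $\edit{t}{t'}\le 2$, apply the $(r,r)$-guarantee to $(\kmer{s},\kmer{t}')$, and finish with the triangle inequality. The only difference is presentational (your brief detour through peeling from the $\kmer{s}$ side is a symmetric variant that also works), so there is nothing substantive to change.
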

\begin{proof}
  Let $\kmer{s}$ and $\kmer{t}$ be two length-$n$ sequences
  with $\edit{s}{t} \le r+2$;
  we want to show that $N_n^{r+2}(\kmer{s}) \cap N_n^{r+2}(\kmer{t}) \cap B \neq \emptyset$.
  Consider a list of edits that transforms
  $\kmer{s}$ to $\kmer{t}$: skipping a pair of indels or two
  substitutions gives a length-$n$ sequence $\kmer{m}$
  such that $\edit{s}{m}\le r$ and
  $\edit{t}{m}=2$.   
  Because $\kmer{s}$ and $\kmer{m}$ are within a distance of $r$ and $B$ is $(r,r)$-guaranteed, we have that $N_n^r(\kmer{s}) \cap N_n^r(\kmer{m}) \cap B \neq \emptyset$, i.e., 
  there exists a length-$n$ sequence $\kmer{v}\in B$
  such that $\edit{s}{v}\leq r$ and $\edit{m}{v}\leq r$.
  By triangle inequality,
  $\edit{t}{v}\leq\edit{t}{m}+\edit{m}{v}\leq r+2$.
  Hence, we have $\kmer{v}\in N_n^{r+2}(\kmer{t})$.
  Clearly, $\kmer{v}\in N_n^{r}(\kmer{s})$ implies that
  $\kmer{v}\in N_n^{r+2}(\kmer{s})$.
  Combined, we have $\kmer{v}\in N_n^{r+2}(\kmer{s})\cap N_n^{r+2}(\kmer{t})\cap B$.
\end{proof}

The next lemma shows that $(1,1)$-guaranteed subsets have the strongest condition.
\begin{Lemma}\label{lem:1-implies-all}
  If $B\subset S_n$ is $(1,1)$-guaranteed, then
  $B$ is $(r,r)$-guaranteed for all $r\geq 1$.
\end{Lemma}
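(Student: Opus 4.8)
The plan is to reduce the statement to the single implication ``$(1,1)$-guaranteed $\Rightarrow$ $(2,2)$-guaranteed'' and then bootstrap with Lemma~\ref{lem:r-implies-rplus2}. Applied repeatedly, Lemma~\ref{lem:r-implies-rplus2} propagates $(1,1)$-guaranteedness through $(3,3),(5,5),\ldots$ and, once we also know $(2,2)$-guaranteedness, through $(4,4),(6,6),\ldots$; together these cover every $r\ge 1$, so a short parity-split induction on $r$ finishes the lemma. (The Conjecture stated above would supply the step $(1,1)\Rightarrow(2,2)$ immediately, but since we do not have it, the real content is to prove this one step directly.)

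To prove $(1,1)\Rightarrow(2,2)$, I would first record a combinatorial reformulation of $(1,1)$-guaranteedness. Identify $\S_n$ with $\Sigma^n$, and for a sequence $\kmer{z}$ and a position $p$ call $\{\kmer{x}\in\S_n : x_q=z_q\text{ for all }q\neq p\}$ an \emph{axis-parallel line}. Claim: $B$ is $(1,1)$-guaranteed if and only if $B$ meets every axis-parallel line. For ``only if'', pick two distinct sequences $\kmer{s},\kmer{t}$ lying on a common line; they differ in exactly one coordinate $p$, so $\edit{s}{t}=1$, and since for equal-length sequences ``edit distance $\le 1$'' coincides with ``Hamming distance $\le 1$'', a direct check shows $N_n^1(\kmer{s})\cap N_n^1(\kmer{t})$ equals exactly that line, so $(1,1)$-guaranteedness forces $B$ to meet it. For ``if'', note $N_n^1(\kmer{s})$ is the union of the $n$ lines through $\kmer{s}$, so $B$ meets it as well, and likewise $B$ meets $N_n^1(\kmer{s})\cap N_n^1(\kmer{t})$ for any pair with $\edit{s}{t}\le 1$.

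The heart of the argument is a purely geometric lemma: for any $\kmer{s},\kmer{t}\in\S_n$ with $\edit{s}{t}\le 2$ there is an axis-parallel line $L$ with $L\subseteq N_n^2(\kmer{s})\cap N_n^2(\kmer{t})$. I would split on the Hamming distance $d_H(\kmer{s},\kmer{t})$. If $d_H(\kmer{s},\kmer{t})\le 2$ (this already covers $\edit{s}{t}\le 1$ and the case where the two edits are both substitutions), let $L$ be the line through $\kmer{s}$ in a direction $p$ at which $\kmer{s}$ and $\kmer{t}$ differ --- any $p$ if $\kmer{s}=\kmer{t}$. Every point of $L$ is within Hamming distance $1$ of $\kmer{s}$ and within Hamming distance $d_H(\kmer{s},\kmer{t})\le 2$ of $\kmer{t}$, hence within edit distance $2$ of both. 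If instead $d_H(\kmer{s},\kmer{t})\ge 3$, then $\edit{s}{t}=2$, and since $\kmer{s},\kmer{t}$ have equal length and $d_H(\kmer{s},\kmer{t})>2$ the optimal $2$-edit transformation between them consists of one insertion and one deletion; hence there is a common subsequence $\kmer{w}$ of length $n-1$ with $\edit{s}{w}=\edit{t}{w}=1$. Let $j$ be a position such that deleting the $j$-th character of $\kmer{t}$ yields $\kmer{w}$, and let $L$ be the line through $\kmer{t}$ in direction $j$. Any $\kmer{u}\in L$ agrees with $\kmer{t}$ off position $j$, so deleting position $j$ from $\kmer{u}$ also yields $\kmer{w}$; hence $\edit{u}{w}\le 1$, so $\edit{u}{s}\le\edit{u}{w}+\edit{w}{s}\le 2$ by the triangle inequality, while $\edit{u}{t}\le 1$. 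In both cases $L\subseteq N_n^2(\kmer{s})\cap N_n^2(\kmer{t})$, and since $B$ meets $L$ we get $N_n^2(\kmer{s})\cap N_n^2(\kmer{t})\cap B\neq\emptyset$; that is, $B$ is $(2,2)$-guaranteed.

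I expect the second case of the geometric lemma to be the main obstacle: a single insertion/deletion pair can push $\kmer{s}$ and $\kmer{t}$ to Hamming distance $\Theta(n)$ (for instance, prepending versus appending a symbol to a periodic string), so no line can be certified to lie inside the common $2$-neighborhood by a Hamming-distance estimate alone; the resolution is to route through the length-$(n-1)$ common subsequence and perturb only the ``extra'' symbol of $\kmer{t}$. Everything else --- the line reformulation and the parity-split induction built on Lemma~\ref{lem:r-implies-rplus2} --- should be routine.
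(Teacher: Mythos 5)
Your proposal is correct and takes essentially the same route as the paper: reduce to the single step ``$(1,1)$-guaranteed $\Rightarrow(2,2)$-guaranteed'' via Lemma~\ref{lem:r-implies-rplus2}, then exhibit an axis-parallel line (which is exactly the common $1$-neighborhood of a Hamming-adjacent pair, so $(1,1)$-guaranteedness forces $B$ to hit it) contained in $N_n^2(\kmer{s})\cap N_n^2(\kmer{t})$. The only difference is cosmetic: in the indel case you anchor the line at the inserted position of $\kmer{t}$ and bound distances through the length-$(n-1)$ common subsequence, whereas the paper anchors it at the to-be-deleted position of $\kmer{s}$ and re-applies the same two edits --- mirror images of one argument.
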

\begin{proof}
  According to the previous lemma, we only need to show that
  $B$ is $(2,2)$-guaranteed.
  Given two length-$n$ sequences $\kmer{s}$ and $\kmer{t}$
  with $\edit{s}{t}=2$,
  consider a list $Q$ of two edits that transforms $\kmer{s}$
  to $\kmer{t}$.
  There are two possibilities:
  \begin{itemize}
  \item
    If both edits in $Q$ are substitutions,
    let $i$ be the position of the first substitution.   
  \item
    If $Q$ consists of one insertion and one deletion,
    let $i$ be the position of the character that is going to be
    deleted from $\kmer{s}$.
  \end{itemize}
  In either case, let $\kmer{m}$ be a length-$n$ sequence
  obtained by replacing the
  $i$-th character of $\kmer{s}$ with another character in $\Sigma$.
  Then $\edit{s}{m}=1$.
  Because $B$ is $(1,1)$-guaranteed, there is a length-$n$ sequence
  $\kmer{v}\in B$
  such that $\edit{s}{v}\leq 1$ and $\edit{m}{v}\leq 1$.
  Observe that either $\kmer{s}=\kmer{v}$ or $\kmer{v}$ is obtained from
  $\kmer{s}$ by one substitution at position $i$.
  So applying the two edits in $Q$ on $\kmer{v}$ also produces $\kmer{t}$,
  \ie, $\edit{t}{v}\leq 2$.
  Therefore, $\kmer{v}\in N_n^{2}(\kmer{s})\cap N_n^{2}(\kmer{t})\cap B$.
\end{proof}


Now we bound the size of a $(1,1)$-guaranteed subset from below.
\begin{Lemma}\label{lem:tau1}
  If $B$ is (1,1)-guaranteed, then
  \[
  \text{(i) for each } \kmer{s}\in\S_n,\; \left|N_n^1(\kmer{s})\cap B\right|\geq
  \begin{cases}
    1 & \text{if }\kmer{s}\in B\\
    n & \text{if }\kmer{s}\not\in B
  \end{cases},
  \qquad
  \text{(ii) } |B| \geq |\S_n|/|\Sigma| = |\Sigma|^{n-1}.
  \]
\end{Lemma}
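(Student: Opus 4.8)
The plan is to prove (i) first and then derive (ii) by a counting argument similar to the one used in Lemma~\ref{lem:int-1-minimum}. For part~(i), fix a sequence $\kmer{s}\in\S_n$. If $\kmer{s}\in B$, then $\kmer{s}\in N_n^1(\kmer{s})\cap B$, giving the trivial bound of $1$. So assume $\kmer{s}\not\in B$. For each position $i=1,\dots,n$, I would pick a sequence $\kmer{m}^i$ obtained from $\kmer{s}$ by substituting the $i$-th character with some other character of $\Sigma$; then $\edit{s}{m^i}=1$, so the $(1,1)$-guaranteed property yields a witness $\kmer{v}^i\in B$ with $\edit{s}{v^i}\le 1$ and $\edit{m^i}{v^i}\le 1$. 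The key observation is that $\kmer{v}^i\neq\kmer{s}$ (since $\kmer{s}\not\in B$ but $\kmer{v}^i\in B$), so $\kmer{v}^i$ is exactly one substitution away from $\kmer{s}$; and that substitution must be at position $i$, because otherwise $\kmer{v}^i$ would differ from $\kmer{m}^i$ at two positions (position $i$ and the position where $\kmer{v}^i$ differs from $\kmer{s}$), contradicting $\edit{m^i}{v^i}\le 1$. Hence the witnesses $\kmer{v}^1,\dots,\kmer{v}^n$ differ from $\kmer{s}$ at pairwise distinct positions, so they are $n$ distinct sequences, all lying in $N_n^1(\kmer{s})\cap B$.

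For part~(ii), I would count incidences between sequences and their neighboring buckets. Consider $H=\{(\kmer{s},\kmer{v})\mid \kmer{s}\in\S_n,\ \kmer{v}\in B,\ \edit{s}{v}\le 1\}$. By part~(i), every sequence $\kmer{s}$ contributes at least $1$ pair to $H$ (and at least $n$ if $\kmer{s}\not\in B$), so $|H|\ge |\S_n| = |\Sigma|^n$. On the other hand, a fixed bucket $\kmer{v}\in B$ is paired with $\kmer{v}$ itself together with all sequences at edit distance $1$ from $\kmer{v}$; within $\S_n$ these are exactly the $n(|\Sigma|-1)$ single-substitution neighbors, so $\kmer{v}$ appears in at most $1+n(|\Sigma|-1)$ pairs. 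This only gives $|B|\ge |\Sigma|^n/(1+n(|\Sigma|-1))$, which is too weak. To get the stated bound $|B|\ge|\Sigma|^{n-1}$, I would instead partition $\S_n$ into $|\Sigma|^{n-1}$ groups according to the first $n-1$ characters: within each group $G$ (the $|\Sigma|$ sequences agreeing on positions $1,\dots,n-1$), I claim $B$ must intersect $N_n^1(\kmer{s})$ for each $\kmer{s}\in G$ in a way that forces $|B\cap G|\ge 1$. Indeed, take two sequences $\kmer{s},\kmer{t}\in G$ differing only at the last position; the $(1,1)$-guarantee gives $\kmer{v}\in B$ with $\edit{s}{v}\le1$ and $\edit{t}{v}\le1$, and one checks (as in Lemma~\ref{lem:1-implies-all}) that such a $\kmer{v}$ must itself lie in $G$ — it agrees with $\kmer{s}$ and $\kmer{t}$ on positions $1,\dots,n-1$. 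Hence $B\cap G\neq\emptyset$ for every group $G$, and summing over the $|\Sigma|^{n-1}$ disjoint groups gives $|B|\ge|\Sigma|^{n-1}$.

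The main obstacle I anticipate is the last step: verifying that the witness $\kmer{v}$ for a pair $\kmer{s},\kmer{t}$ differing only at the final position is forced to agree with them on the first $n-1$ positions. The subtlety is that $\kmer{v}$ is only constrained to be within edit distance~$1$ (not Hamming distance~$1$) of each of $\kmer{s}$ and $\kmer{t}$, so one must rule out that $\kmer{v}$ is obtained by an insertion/deletion pair that shifts characters — a case-analysis essentially identical to the one already carried out in the proof of Lemma~\ref{lem:1-implies-all}. Once that structural fact is in hand, both parts follow cleanly, and part~(i) is really just the $n$ distinct-position witnesses packaged together.
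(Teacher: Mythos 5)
Your proof is correct, but part~(ii) takes a genuinely different route from the paper's. For part~(i) your argument is essentially the paper's, just organized in the forward direction: one witness per position, forced by $\edit{m^i}{v^i}\le 1$ to be the substitution at that position (and the subtlety you flag elsewhere is harmless here and in part~(ii), since edit distance $\le 1$ between two sequences of the \emph{same} length $n$ already forces Hamming distance $\le 1$ --- a lone insertion or deletion changes the length). For part~(ii), the incidence count you abandoned is exactly what the paper uses, and it does reach $|\Sigma|^{n-1}$: the point you missed is to lower-bound $|H|$ using the full dichotomy of part~(i), namely $|H|\ge |B|\cdot 1 + \left(|\Sigma|^n-|B|\right)\cdot n$ rather than $|H|\ge|\Sigma|^n$; combined with $|H|=\left(n(|\Sigma|-1)+1\right)|B|$ the terms cancel to give $|B|\ge|\Sigma|^{n-1}$ exactly. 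Your replacement --- partitioning $\S_n$ into the $|\Sigma|^{n-1}$ fibers over the first $n-1$ characters and showing each fiber meets $B$ --- is a valid and self-contained alternative: the common $1$-neighbor of two sequences differing only in the last position must itself differ from them only in the last position, so it lies in the same fiber. It does not even use part~(i), and it is arguably the cleaner argument for (ii); the paper's double counting has the advantage of reusing the quantitative content of part~(i) and the same template as its Lemma~\ref{lem:int-1-minimum}.
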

\begin{proof}
  Let $B\subset \S_n$ be an arbitrary $(1,1)$-guaranteed subset.
  For part~(i), because $\kmer{s}\in N_n^1(\kmer{s})$, if $\kmer{s}$ is also
  in $B$, then $\kmer{s}$ is in their intersection, hence
  $\left|N_n^1(\kmer{s})\cap B\right|\geq 1$.
  If $\kmer{s}=s_1s_2\ldots s_n \not\in B$, then it must have
  at least $n$ $1$-neighbors $\kmer{v}^i\in B$,
  one for each position $1\leq i\leq n$, where
  $\kmer{v}^i = s_1\ldots s_{i-1}v_is_{i+1}\ldots s_n$, $v_i\neq s_i$.
  Suppose conversely that this is not the case for a particular $i$.
  Let $\kmer{t}= s_1\ldots s_{i-1}t_is_{i+1}\ldots s_n$ where
  $t_i\neq s_i$. We have $\edit{s}{t}=1$.
  Also, $N_n^1(\kmer{s})\cap N_n^1(\kmer{t}) = \{x\in \Sigma \mid s_1\ldots s_{i-1}xs_{i+1}\ldots s_n\}$,
  but none of them is in $B$~(consider the two cases $x=s_i$ and $x\neq s_i$), i.e., $N_n^1(\kmer{s})\cap N_n^1(\kmer{t}) \cap B = \emptyset$.
  This contradicts the assumption that $B$ is $(1,1)$-guaranteed.
  
  For part~(ii), consider the collection of pairs
  $H=\left\{(\kmer{s}, \kmer{v})\,\middle|\, \kmer{s}\in\S_n \text{ and }
  \kmer{v}\in N_n^1(\kmer{s})\cap B \right\}$.
  For all $\kmer{v}\in B$, the number of sequences $\kmer{s}\in \S_n$ with
  $\edit{s}{v}\leq 1$ is $n\left(|\Sigma|-1\right)+1$.
  So $|H| = \left(n\left(|\Sigma|-1\right)+1\right)|B|$.
  On the other hand, part~(i) implies that
  $|H|\geq |B|+ n\left(|\Sigma|^n-|B|\right)$.
  Combined, we have $|B|\geq |\Sigma|^{n-1}$, as claimed.
\end{proof}

In Section~\ref{sec:1-construction}, we give an algorithm to 
construct a $(1,1)$-guaranteed subset $B$ that achieves
the size $|B| = |\Sigma|^{n-1}$;
furthermore, the corresponding $(1,3)$-sensitive bucketing function $f^1_B$
satisfies
$\left|f^1_B(\kmer{s})\right|=1$ if $\kmer{s}\in B$ and
$\left|f^1_B(\kmer{s})\right|=n$ if $\kmer{s}\not\in B$.
This shows that the lower bounds proved above in Lemma~\ref{lem:tau1} are tight
and that the constructed $(1,1)$-guaranteed subset $B$ is optimal in the sense of minimizing both $|B|$ and $\left|f^1_B(\kmer{s})\right|$.
Notice that this result improves Lemma~\ref{lem:int-1-injective} with $r = 1$ where we showed that $\S_n$ is a $(1,1)$-guaranteed subset of size $|\Sigma|^n$.
According to Lemma~\ref{lem:1-implies-all}, this constructed $B$ is also $(r,r)$-guaranteed.
So the corresponding bucketing function
$f_B^r$ is $(r, 2r+1)$-sensitive for all integers $r\geq 1$.

\subsection{Construction of optimal \texorpdfstring{$(1,1)$}{(1,1)}-guaranteed subsets} \label{sec:1-construction}

Let $m=|\Sigma|$ and denote the characters in $\Sigma$ by
$c_1, c_2, \ldots, c_m$.
We describe a recursive procedure to construct a
$(1,1)$-guaranteed subset of $\S_n$.
In fact, we show that $\S_n$ can be partitioned into $m$
subsets $B_n^1 \sqcup B_n^2\sqcup \cdots \sqcup B_n^m$ such that
each $B_n^i$ is $(1,1)$-guaranteed. 
Here the notation $\sqcup$ denotes disjoint union.
The partition of $\S_{n}$ is built from the partition of $\S_{n-1}$. 
The base case is $\S_1=\{c_1\}\sqcup\cdots\sqcup \{c_m\}$.

Suppose that we already have the partition for
$\S_{n-1} = B_{n-1}^1 \sqcup B_{n-1}^2\sqcup\cdots\sqcup B_{n-1}^m$.
Let
\[
B_n^1=\left(c_1\circ B_{n-1}^1\right) \sqcup
\left(c_2\circ B_{n-1}^2\right) \sqcup\cdots\sqcup
\left(c_m\circ B_{n-1}^{m}\right),
\]
where $c\circ B$ is the set obtained by prepending the character $c$
to each sequence in the set $B$.
For $B_n^2$, the construction is similar where the partitions
of $\S_{n-1}$ are shifted (rotated) by one such that
$c_1$ is paired with $B_{n-1}^2$,
$c_2$ is paired with $B_{n-1}^3$, and so on.
In general, for $1\leq i\leq m$,
\[
B_n^i=\left(c_1\circ B_{n-1}^i\right) \sqcup
\left(c_2\circ B_{n-1}^{i+1}\right) \sqcup\cdots\sqcup
\left(c_{m-i+1} \circ B_{n-1}^m\right) \sqcup
\left(c_{m-i+2} \circ B_{n-1}^1\right) \sqcup\cdots\sqcup
\left(c_m\circ B_{n-1}^{i-1}\right).
\]
Examples of this partition for $\Sigma=\{$A, C, G, T$\}$
and $n=2, 3$ are shown below.
{\small
\begin{align*}
  B_2^1=\{&\textrm{AA, CC, GG, TT}\}\\
  B_2^2=\{&\textrm{AC, CG, GT, TA}\}\\
  B_2^3=\{&\textrm{AG, CT, GA, TC}\}\\
  B_2^4=\{&\textrm{AT, CA, GC, TG}\}\\\\
  B_3^1=\{&\textrm{AAA, ACC, AGG, ATT, CAC, CCG, CGT, CTA,}\\
  &\textrm{GAG, GCT, GGA, GTC, TAT, TCA, TGC, TTG}\}\\
  B_3^2=\{&\textrm{AAC, ACG, AGT, ATA, CAG, CCT, CGA, CTC,}\\
  &\textrm{GAT, GCA, GGC, GTG, TAA, TCC, TGG, TTT}\}\\
  B_3^3=\{&\textrm{AAG, ACT, AGA, ATC, CAT, CCA, CGC, CTG,}\\
  &\textrm{GAA, GCC, GGG, GTT, TAC, TCG, TGT, TTA}\}\\
  B_3^4=\{&\textrm{AAT, ACA, AGC, ATG, CAA, CCC, CGG, CTT,}\\
  &\textrm{GAC, GCG, GGT, GTA, TAG, TCT, TGA, TTC}\}\\
\end{align*}
}%


Note that each sequence in $\S_n$ appears in exactly one of the
subsets $B_n^i$, justifying the use of the disjoint union
notation.
(The induction proof of this claim has identical structure as
the following proofs of Lemma~\ref{lem:1-construction} and
\ref{lem:1-fs},
so we leave it out for conciseness.)
Now we prove the correctness of this construction.
\begin{Lemma}\label{lem:1-construction}
  Each constructed $B_n^i$ is a minimum $(1,1)$-guaranteed subset of $\S_n$.
\end{Lemma}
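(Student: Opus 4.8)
The plan is to prove the two assertions of Lemma~\ref{lem:1-construction} separately: first that each $B_n^i$ is $(1,1)$-guaranteed, and second that it has the minimum possible size $|\Sigma|^{n-1}$, which combined with Lemma~\ref{lem:tau1}(ii) gives minimality. The size claim is the easier half: since $\S_n$ is partitioned into the $m = |\Sigma|$ sets $B_n^1, \ldots, B_n^m$, and (as one checks by the same induction that establishes the partition property) the recursion is symmetric enough that all $m$ parts have equal cardinality, each must have size $|\S_n|/m = |\Sigma|^{n}/|\Sigma| = |\Sigma|^{n-1}$. Together with the lower bound $|B| \ge |\Sigma|^{n-1}$ from Lemma~\ref{lem:tau1}(ii), this shows each $B_n^i$ is a minimum $(1,1)$-guaranteed subset.

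The core work is showing $B_n^i$ is $(1,1)$-guaranteed, i.e., for any $\kmer{s}, \kmer{t} \in \S_n$ with $\edit{s}{t} \le 1$, there is a $\kmer{v} \in N_n^1(\kmer{s}) \cap N_n^1(\kmer{t}) \cap B_n^i$. I would proceed by induction on $n$. The base case $n=1$ is immediate: if $\edit{s}{t} \le 1$ with both of length $1$, then either $\kmer{s} = \kmer{t}$, and one of $\kmer{s}, \kmer{t}$ already lies in $B_1^i = \{c_i\}$ only when $\kmer{s}=c_i$ — actually here I need to be a little careful and instead verify directly: $N_1^1(\kmer{s}) = \S_1$ for every $\kmer{s}$, so $N_1^1(\kmer{s}) \cap N_1^1(\kmer{t}) \cap B_1^i = B_1^i \neq \emptyset$. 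For the inductive step, write $\kmer{s} = a \circ \kmer{s}'$ and $\kmer{t} = b \circ \kmer{t}'$ with $a, b \in \Sigma$ and $\kmer{s}', \kmer{t}' \in \S_{n-1}$. The single edit (or zero edits) transforming $\kmer{s}$ into $\kmer{t}$ either touches the first position or not, and I would split into cases accordingly: (a) the edit is a substitution at position $1$, so $\kmer{s}' = \kmer{t}'$ and $a \ne b$; (b) the edit is a substitution at some later position, so $a = b$ and $\edit{s'}{t'} = 1$; (c) the edit is an indel — an insertion/deletion — which, since both strings have length $n$, must be viewed as the deletion of the first character of one followed by nothing, forcing $\kmer{t}' = $ (suffix of $\kmer{s}$) or symmetric, again reducing to a relation between the tails; (d) no edit, $\kmer{s} = \kmer{t}$. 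In each case I produce a suitable tail $\kmer{v}'$ using the inductive hypothesis applied to the appropriate block $B_{n-1}^j$ (where $j$ is determined by the prepended character and $i$ via the rotation rule), prepend the correct character $c$ so that $c \circ \kmer{v}' \in B_n^i$, and verify $\edit{s}{c\circ\kmer{v}'} \le 1$ and $\edit{t}{c\circ\kmer{v}'} \le 1$ using that prepending a common character cannot increase edit distance and that $\edit{c\circ\kmer{x}}{c'\circ\kmer{x}} \le 1$.

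The main obstacle is case (c) (and more generally getting the bookkeeping of the rotation indices right): when the single edit between $\kmer{s}$ and $\kmer{t}$ is an indel, the natural decomposition is not ``same first character, tails close'' but rather a shift, so I need to argue that such an indel can always be re-expressed — using that $\edit{s}{t}\le 1$ and both have length $n$ — as a substitution, or else handle it by choosing $\kmer{v}$ to equal one of $\kmer{s}, \kmer{t}$ directly after confirming the relevant endpoint lies in $B_n^i$. Actually the cleanest route is to observe that when $\edit{s}{t} \le 1$ and $|\kmer{s}| = |\kmer{t}| = n$, the two strings differ in at most one position, so only cases (a), (b), (d) arise; but I must double check that the edit-distance-one-equal-length characterization is exactly ``differ in at most one coordinate'' (an insertion-then-deletion costs $2$, so yes). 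Granting that, case (c) disappears and the induction becomes the routine index-chasing of cases (a), (b), (d). I would also need, as a small supporting fact, that if $\kmer{v}' \in N_{n-1}^1(\kmer{s}') \cap N_{n-1}^1(\kmer{t}')$ then $c \circ \kmer{v}'$ is within distance $1$ of both $c \circ \kmer{s}'$ and $c \circ \kmer{t}'$, which is immediate since a common prefix alignment shows $\edit{c \circ \kmer{x}}{c \circ \kmer{y}} \le \edit{x}{y}$.
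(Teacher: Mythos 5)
Your proposal is correct and follows essentially the same route as the paper: reduce minimality to Lemma~\ref{lem:tau1}(ii) plus the size count $|B_n^i| = |\Sigma|^{n-1}$, then prove the $(1,1)$-guarantee by induction on $n$ with a case split on whether the single substitution occurs at the first position, using the rotation rule to pick the prefix character in the first case and the inductive hypothesis on the suffix block in the second. Your extra observation that an indel cannot occur between equal-length strings at edit distance $1$ (so your case (c) is vacuous) is a point the paper leaves implicit, and it is resolved correctly.
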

\begin{proof}
  By Lemma~\ref{lem:tau1}, we only need to show that each $B_n^i$
  is $(1,1)$-guaranteed and has size $|\Sigma|^{n-1} = m^{n-1}$.
  The proof is by induction on $n$.
  The base case $\S_1=\{c_1\}\sqcup \cdots\sqcup\{c_m\}$ is easy to verify.

  As the induction hypothesis, suppose that
  $\S_{n-1}=\bigsqcup_{j=1}^m B_{n-1}^j$, where each
  $B_{n-1}^j$ is $(1,1)$-guaranteed and has size $m^{n-2}$.
  Consider an arbitrary index $1\leq i\leq m$.
  By construction, we have
  $\left|B_n^i\right|=\sum_{j=1}^m \left|B_{n-1}^j\right| = m^{n-1}$.
  To show that $B_n^i$ is $(1,1)$-guaranteed,
  consider two sequences $\kmer{s}, \kmer{t}\in \S_n$ with $\edit{s}{t}=1$.
  If the single substitution happens on the first character,
  let $\kmer{x}\in \S_{n-1}$ be the common $(n-1)$-suffix of $\kmer{s}$
  and $\kmer{t}$.
  Since $\bigsqcup_{j=1}^m B_{n-1}^j$ is a partition of $\S_{n-1}$,
  $\kmer{x}$ must appear in one of the subsets $B_{n-1}^{\ell}$.
  In $B_n^i$, it is paired with one of the characters $c_k$.
  Let $\kmer{y}=c_k\circ\kmer{x}$, then $\kmer{y}\in B_n^i$.
  Furthermore, $\kmer{s}$ and $\kmer{t}$ can each be transformed to
  $\kmer{y}$ by at most one substitution on the first character.
  Thus, $\kmer{y}\in N_n^1(\kmer{s})\cap N_n^1(\kmer{t})\cap B_n^i$.

  If the single substitution between $\kmer{s}$ and $\kmer{t}$ does not
  happen on the first position, then they share the common first
  character $c_k$.
  In $B_n^i$, $c_k$ is paired with one of the subsets $B_{n-1}^{\ell}$.
  Let $\kmer{s'}$ and $\kmer{t'}$ be $(n-1)$-suffixes of $\kmer{s}$
  and $\kmer{t}$, respectively.
  It is clear that $\edit{s'}{t'}=1$.
  By the induction hypothesis, $B_{n-1}^{\ell}$ is $(1,1)$-guaranteed.
  So there is a sequence $\kmer{x}\in B_{n-1}^{\ell}$ of length $n-1$ such that
  $\edit{s'}{x}\leq 1$ and $\edit{t'}{x}\leq 1$.
  Let $\kmer{y}=c_k\circ\kmer{x}$, then $\kmer{y}\in B_n^i$ by the construction.
  Furthermore, $\edit{s}{y}=\edit{s'}{x}\leq 1$ and
  $\edit{t}{y}=\edit{t'}{x}\leq 1$.
  Thus, $\kmer{y}\in N_n^1(\kmer{s})\cap N_n^1(\kmer{t})\cap B_n^i$.
  Therefore, $B_n^i$ is $(1,1)$-guaranteed.
  Since the index $i$ is arbitrary, this completes the proof.
\end{proof}

It remains to show that for each $\kmer{s}\in \S_n$,
$\left|N_n^1(\kmer{s})\cap B_n^i\right|$ matches the lower bound
in Lemma~\ref{lem:tau1}.
Together with Lemma~\ref{lem:1-construction}, this proves that
each constructed $B_n^i$ yields an optimal $(1,3)$-sensitive
bucketing function in terms of minimizing both the
total number of buckets and the number of buckets each length-$n$
sequence is sent to.

\begin{Lemma}\label{lem:1-fs}
  For $\kmer{s}\in \S_n$, each constructed $B_n^i$ satisfies
  $\left|N_n^1(\kmer{s})\cap B_n^i\right| =
  \begin{cases}
    1 & \text{if } \kmer{s}\in B_n^i\\
    n & \text{if } \kmer{s}\not\in B_n^i
  \end{cases}.$
\end{Lemma}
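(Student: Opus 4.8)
The plan is to prove this by induction on $n$, mirroring the structure of the construction and of the proof of Lemma~\ref{lem:1-construction}. The base case $n=1$ is immediate: $\S_1 = \{c_1\}\sqcup\cdots\sqcup\{c_m\}$, each $B_1^i = \{c_i\}$, and for any $\kmer{s}\in\S_1$ we have $N_1^1(\kmer{s}) = \S_1$, so $N_1^1(\kmer{s})\cap B_1^i$ is a single sequence; this is $1$ when $\kmer{s}\in B_1^i$ and equals $n = 1$ when $\kmer{s}\notin B_1^i$, so both cases coincide and hold. For the inductive step, I would fix an index $i$ and a sequence $\kmer{s} = c_k\circ\kmer{s}'$ with $\kmer{s}'\in\S_{n-1}$, and split $N_n^1(\kmer{s})\cap B_n^i$ according to where a neighbor $\kmer{y}\in B_n^i$ differs from $\kmer{s}$: either $\kmer{y} = \kmer{s}$, or $\kmer{y}$ differs from $\kmer{s}$ by a substitution on the first character (so $\kmer{y} = c_{k'}\circ\kmer{s}'$ with $k'\neq k$), or $\kmer{y}$ shares the first character $c_k$ and differs from $\kmer{s}$ by a single substitution somewhere in the last $n-1$ positions (so $\kmer{y} = c_k\circ\kmer{y}'$ with $\kmer{y}'\in N_{n-1}^1(\kmer{s}')$, $\kmer{y}'\neq\kmer{s}'$).

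The first-character contributions are controlled by the construction's rotation: the suffix $\kmer{s}'$ lies in a unique $B_{n-1}^{\ell}$, and in $B_n^i$ that block $B_{n-1}^\ell$ is paired with exactly one character, call it $c_{j}$. Hence exactly one sequence of the form $c_{j}\circ\kmer{s}'$ lies in $B_n^i$, and this accounts for precisely one element of $N_n^1(\kmer{s})\cap B_n^i$ among all sequences whose last $n-1$ characters equal $\kmer{s}'$ (whether or not $c_j = c_k$). The last-position contributions are the sequences $c_k\circ\kmer{y}'$ with $\kmer{y}'\in N_{n-1}^1(\kmer{s}')$, $\kmer{y}'\neq\kmer{s}'$; by the construction, $c_k$ is paired in $B_n^i$ with a unique block $B_{n-1}^{\ell'}$, so these contribute exactly $\left|N_{n-1}^1(\kmer{s}')\cap B_{n-1}^{\ell'}\right|$ minus (if $\kmer{s}'\in B_{n-1}^{\ell'}$) one, to avoid double-counting $\kmer{s}'$ itself, which was already handled in the first-character case when $j = k$. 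I would then apply the induction hypothesis to $\left|N_{n-1}^1(\kmer{s}')\cap B_{n-1}^{\ell'}\right|$, which is $1$ if $\kmer{s}'\in B_{n-1}^{\ell'}$ and $n-1$ otherwise.

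Combining: if $\kmer{s}\in B_n^i$, then by the construction $\kmer{s}' \in B_{n-1}^{\ell'}$ where $B_{n-1}^{\ell'}$ is the block paired with $c_k$, so the last-position count is $1 - 1 = 0$ new sequences beyond $\kmer{s}$ itself, and the first-character count contributes exactly the one sequence $c_j\circ\kmer{s}'$ — but since $\kmer{s}\in B_n^i$ this forced sequence is $\kmer{s}$ itself ($j = k$), giving total $1$. If $\kmer{s}\notin B_n^i$, there are two sub-cases. Either $\kmer{s}'\in B_{n-1}^{\ell'}$ (the block paired with $c_k$) but the first-character-forced sequence $c_j\circ\kmer{s}'$ has $j\neq k$: then we get $1$ from the first-character case plus $(1-1) = 0$ fresh last-position neighbors plus... wait, this must total $n$, so I need to be careful — actually the resolution is that when $\kmer{s}\notin B_n^i$ we cannot have $\kmer{s}'$ lying in the block paired with $c_k$, because that would force $\kmer{s}\in B_n^i$; so $\kmer{s}'\notin B_{n-1}^{\ell'}$, the induction hypothesis gives $\left|N_{n-1}^1(\kmer{s}')\cap B_{n-1}^{\ell'}\right| = n-1$ with none of these equal to $\kmer{s}'$, contributing $n-1$ fresh sequences, plus the single first-character-forced sequence $c_j\circ\kmer{s}'$, for a total of $n$. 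The main obstacle I anticipate is precisely this bookkeeping: cleanly partitioning $N_n^1(\kmer{s})\cap B_n^i$ into the "changed first character" part and the "unchanged first character" part, identifying exactly which block each is paired with under the rotation, and making sure $\kmer{s}$ itself is counted once and not twice across the two parts. Once the case analysis on whether $\kmer{s}\in B_n^i$ (equivalently, whether $\kmer{s}'$ sits in the block paired with $c_k$ inside $B_n^i$) is set up correctly, the arithmetic is routine.
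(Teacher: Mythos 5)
Your proof is correct, but it takes a genuinely different route from the paper's. You give a direct exact count: you split $N_n^1(\kmer{s})\cap B_n^i$ by the position of the single substitution (first character versus suffix), use the fact that the rotation pairs each block of the partition of $\S_{n-1}$ with exactly one character (so exactly one sequence with suffix $\kmer{s}'$ lands in $B_n^i$), apply the induction hypothesis to the block paired with $c_k$ for the suffix contributions, and observe that $\kmer{s}\in B_n^i$ if and only if $\kmer{s}'$ lies in the block paired with $c_k$ — which is exactly the dichotomy that makes the arithmetic come out to $1$ or $n$. The paper instead argues the two cases separately and asymmetrically: for $\kmer{s}\in B_n^i$ it shows (by induction) that any two distinct sequences in $B_n^i$ are at edit distance at least $2$, so the intersection is just $\{\kmer{s}\}$; for $\kmer{s}\notin B_n^i$ it imports the lower bound of $n$ position-wise neighbors from Lemma~\ref{lem:tau1} together with Lemma~\ref{lem:1-construction}, and then uses the independence property from the first case to rule out any further neighbors. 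Your version is more self-contained (it never invokes the $(1,1)$-guaranteed property or Lemma~\ref{lem:tau1}) at the cost of more delicate bookkeeping about double-counting $\kmer{s}$; the paper's version is shorter given the earlier lemmas and makes explicit the structural fact, highlighted again in the discussion section, that $B_n^i$ is an independent set in the graph $G_n^1$. One point worth stating explicitly if you write this up: in the case $\kmer{s}\in B_n^i$, the induction hypothesis gives $\bigl|N_{n-1}^1(\kmer{s}')\cap B_{n-1}^{\ell'}\bigr|=1$, and that unique element must be $\kmer{s}'$ itself since $\kmer{s}'$ already lies in the intersection — this is what justifies counting zero fresh suffix neighbors.
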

\begin{proof}
  We proceed by induction on $n$.
  The base case $n=1$ is trivially true because
  $|B_1^i|=1$ and all single-character sequences are
  within one edit of each other.
  Suppose that the claim is true for $n-1$.
  Consider an arbitrary index $i$.
  If $\kmer{s}\in B_n^i$, we show that any other length-$n$
  sequence $\kmer{t}\in B_n^i$
  has edit distance at least $2$ from $\kmer{s}$, namely
  $N_n^1(\kmer{s})\cap B_n^i=\{\kmer{s}\}$.
  Let $\kmer{s'}$ and $\kmer{t'}$ be the $(n-1)$-suffixes of $\kmer{s}$
  and $\kmer{t}$ respectively.
  According to the construction, if $\kmer{s}$ and $\kmer{t}$ have the
  same first character, then $\kmer{s'}$ and $\kmer{t'}$ are in the
  same $B_{n-1}^j$ for some index $j$.
  By the induction hypothesis, $\edit{s'}{t'}\geq 2$ (otherwise
  $\left|N_{n-1}^1\left(\kmer{s'}\right)\cap B_{n-1}^j\right|\geq 2$),
  and therefore $\edit{s}{t}=\edit{s'}{t'}\geq 2$.
  If $\kmer{s}$ and $\kmer{t}$ are different at the first character, then
  $\kmer{s'}$ and $\kmer{t'}$ are not in the same $B_{n-1}^j$, so
  $\kmer{s'}\neq\kmer{t'}$ (recall that $B_{n-1}^j$ and $B_{n-1}^{k}$
  are disjoint if $j\neq k$), namely $\edit{s'}{t'}\geq 1$.
  Together with the necessary substitution at the first character,
  we have $\edit{s}{t}=1+\edit{s'}{t'}\geq 2$.

  If $\kmer{s}\not\in B_n^i$, Lemma~\ref{lem:tau1} and \ref{lem:1-construction}
  guarantee that $\kmer{s}$ has $n$ $1$-neighbors $\kmer{v}^{k}$ in $B_n^i$,
  $k=1,\ldots, n$, where $\kmer{v}^{k}$ is obtained from $\kmer{s}$
  by a single substitution at position $k$.
  Let $\kmer{t}\neq \kmer{s}$ be a $1$-neighbor of $\kmer{s}$.
  Since $\kmer{t}$ can only differ from $\kmer{s}$ by a single substitution
  at some position $\ell$, we know that either $\kmer{t}=\kmer{v}^{\ell}$
  or the edit distance between $\kmer{t}$ and $\kmer{v}^{\ell}$ is $1$.
  In the latter case, $\kmer{t}$ cannot be in $B_n^i$ otherwise
  $\left|N_n^1\left(\kmer{v}^{\ell}\right)\cap B_n^i\right| \geq 2$,
  contradicting the result of the previous paragraph.
  Therefore, $N_n^1(\kmer{s})\cap B_n^i=\left\{\kmer{v}^1,\ldots
  \kmer{v}^{n}\right\}$ which has size $n$.
\end{proof}

We end this section by showing that a membership query can be done in $O(n)$ time
on the $(1,1)$-guaranteed subset $B$ constructed above~(\ie, $B=B_n^i$ for some $i$).
Thanks to its regular structure, the query is performed
without explicit construction of $B$.  
Consequently, the bucketing functions using $B$ can be computed without
computing and storing this subset of size $|\Sigma|^{n-1}$.

Specifically, suppose that we choose $B=B_n^i$ for some fixed $1\leq i\leq m$.
Let $\kmer{s}$ be a given length-$n$ sequence; we want to query if $\kmer{s}$ is in $B$ or not.
This is equivalent to determining whether the index of the partition of $\S_n$
that $\kmer{s}$ falls into is $i$ or not.
Write $\kmer{s}=s_1s_2\ldots s_n$ and let $\kmer{s}'=s_2\ldots s_n$
be the $(n-1)$-suffix of $\kmer{s}$.
Suppose that it has been determined that $\kmer{s}'\in B_{n-1}^j$ for
some index $1\leq j\leq m$,
\ie, the sequence $\kmer{s}'$ of length $n-1$ comes from the $j$-th partition of $\S_{n-1}$.
By construction, the index $\ell$ for which $\kmer{s}\in B_n^{\ell}$ is
uniquely determined by the character $s_1=c_k\in \Sigma$ and the index $j$
according to the formula $\ell = (j+m+1-k)\bmod m$.
The base case $n=1$ is trivially given by the design that $c_p\in B_1^p$ for
all $1\leq p\leq m$.
This easily translates into a linear-time algorithm that scans
the input length-$n$ sequence
$\kmer{s}$ backwards and compute the index $\ell$ such that
$\kmer{s}\in B_n^{\ell}$.
To answer the membership query, we only need to check whether $\ell=i$.
We provide an implementation of both the construction
and the efficient membership query of a $(1,1)$-guaranteed subset at
\url{https://github.com/Shao-Group/lsbucketing}.


\subsection{A \texorpdfstring{$(3,5)$}{(3,5)}-sensitive bucketing function} \label{sec:1-hash-collision}
Let $B\subset \S_n$ be one of the constructed $(1,1)$-guaranteed subsets.
Recall that the resulting bucketing function $f^r_B$ is $(r, 2r + 1)$-sensitive
for all integers $r\geq 1$;
in particular, $f^2_B$ is $(2, 5)$-sensitive. We are able to strengthen this result
by showing that $f^2_B$ is in fact $(3,5)$-sensitive.

\begin{theorem}\label{thm:h2c}
  Let $B\subset \S_n$ be a $(1,1)$-guaranteed subset.
  The bucketing function $f^2_B$ is $(3, 5)$-sensitive.
\end{theorem}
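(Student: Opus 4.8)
The plan is to verify the two LSB-properties for $f^2_B$ with $d_1 = 3$ and $d_2 = 5$. Property~\eqref{eq:lsb2} with $d_2 = 5$ is already free: we showed in general that $f^r_B$ satisfies \eqref{eq:lsb2} with $d_2 = 2r+1$, so for $r = 2$ any two sequences at edit distance at least $5$ go to disjoint buckets regardless of $B$. So the entire content is establishing property~\eqref{eq:lsb1}: for any $\kmer{s}, \kmer{t} \in \S_n$ with $\edit{s}{t} \le 3$, we must find a length-$n$ sequence $\kmer{v} \in B$ with $\edit{s}{v} \le 2$ and $\edit{t}{v} \le 2$, i.e.\ $N_n^2(\kmer{s}) \cap N_n^2(\kmer{t}) \cap B \neq \emptyset$. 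Equivalently, I need to show that every $(1,1)$-guaranteed $B$ is ``$(3,2)$-guaranteed'' in the sense that the common $2$-neighborhood of any two sequences within distance $3$ meets $B$.

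The natural strategy is to reduce from distance $3$ to distance $1$ by finding an intermediate sequence $\kmer{m}$ of length $n$ with $\edit{s}{m} = 1$ (so that $(1,1)$-guaranteedness applies to the pair $\kmer{s}, \kmer{m}$) and $\edit{m}{t} \le 2$ (so that the triangle inequality keeps everything inside the $2$-neighborhoods). Concretely: take a list of at most $3$ edits transforming $\kmer{s}$ to $\kmer{t}$; pulling off a well-chosen single edit (or canceling a pair of indels down to one edit) should give such an $\kmer{m}$. Then $(1,1)$-guaranteedness gives $\kmer{v} \in B$ with $\edit{s}{v} \le 1$ and $\edit{m}{v} \le 1$, whence $\edit{t}{v} \le \edit{t}{m} + \edit{m}{v} \le 2 + 1 = 3$ — but that only gives distance $3$, not $2$, to $\kmer{t}$. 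So the crude triangle-inequality bound is off by one, and a more careful argument is needed: I expect that with the right choice of $\kmer{m}$ (matching $\kmer{s}$ off by one substitution at a position chosen from the edit script, as in the proof of Lemma~\ref{lem:1-implies-all}) the vertex $\kmer{v}$ coming from $(1,1)$-guaranteedness is forced to be either $\kmer{s}$ itself or a one-substitution perturbation of $\kmer{s}$ at a controlled position, and then re-applying the remaining two edits of the script to $\kmer{v}$ (rather than to $\kmer{m}$) lands within distance $2$ of $\kmer{t}$. This mirrors how Lemma~\ref{lem:1-implies-all} gets $\edit{t}{v} \le 2$ from a pair of edits by arguing about the concrete form of $\kmer{v}$, not just its distance.

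I would organize it by cases on the edit script of length $\le 3$ from $\kmer{s}$ to $\kmer{t}$ (after the usual normalization: pad with trivial substitutions to exactly $3$ edits, reorder so indels come in adjacent insertion/deletion pairs before substitutions). The easy regime is when the script contains at least one substitution: split off that substitution at position $i$, let $\kmer{m}$ differ from $\kmer{s}$ only at position $i$, apply $(1,1)$-guaranteedness to $(\kmer{s},\kmer{m})$, observe $\kmer{v}$ agrees with $\kmer{s}$ except possibly at position $i$, and then the remaining two edits carry $\kmer{v}$ to distance $\le 2$ of $\kmer{t}$ exactly as in Lemma~\ref{lem:1-implies-all}'s two-edit argument. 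The remaining regime is when all (up to) three edits are indels — but a length-preserving script cannot have an odd number of indels, so there are really $0$ or $2$ indels; with $0$ indels $\kmer{s}=\kmer{t}$ and we use $\kmer{s}\in N_n^1(\kmer{s})$ together with part~(i) of Lemma~\ref{lem:tau1}; with $2$ indels the third edit is a (possibly trivial) substitution and we are back in the first regime. The main obstacle is precisely the bookkeeping in the substitution case: ensuring that the choice of which edit to peel off, combined with the rigidity of $\kmer{v}$ from $(1,1)$-guaranteedness, really does keep $\edit{t}{v} \le 2$ rather than $3$ — in particular handling the subcase where the peeled substitution's position interacts with an indel pair in the rest of the script, and the subcase $\kmer{v} = \kmer{s}$ versus $\kmer{v} \neq \kmer{s}$. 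I expect the argument to go through but the case analysis to be the delicate part.
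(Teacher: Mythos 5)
Your overall plan---reduce property~\eqref{eq:lsb1} at distance $3$ to the $(1,1)$-guarantee via an intermediate pair at edit distance $1$, then use the rigidity of the common $1$-neighborhood to beat the crude triangle inequality---has the right flavor, but the specific anchoring you propose contains a genuine gap. You apply the $(1,1)$-guarantee to the pair $(\kmer{s},\kmer{m})$ with $\edit{s}{m}=1$ and hope that the resulting $\kmer{v}$, which agrees with $\kmer{s}$ everywhere except possibly at the peeled position $i$, is carried to within distance $2$ of $\kmer{t}$ by the remaining two edits. In the three-substitutions case this fails: the guarantee gives no control over \emph{which} character $\kmer{v}$ carries at position $i$, and the remaining two edits (substitutions at $j$ and $k$) never touch position $i$, so the resulting sequence can disagree with $\kmer{t}$ at all of $i$, $j$, $k$, leaving only $\edit{t}{v}\le 3$. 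This is not bookkeeping you can fix by peeling a different edit: any $\kmer{v}$ with $\edit{s}{v}\le 1$ already satisfies $\edit{t}{v}\ge \edit{s}{t}-\edit{s}{v}=2$, and to actually achieve $2$ you essentially need $v_i=t_i$, which a $(1,1)$-guaranteed set need not supply---for the constructed $B_n^i$ and $\kmer{s}\notin B_n^i$, Lemma~\ref{lem:1-fs} says there is exactly one $B$-neighbor of $\kmer{s}$ per position, with a fixed character that need not equal $t_i$. The same problem recurs in your substitution-plus-indels subcase whenever the peeled substitution is not at the position the indel pair deletes.

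The paper closes exactly this hole by applying the guarantee in the \emph{middle} of the edit path rather than anchoring it at $\kmer{s}$. For three substitutions it takes consecutive intermediates with $\edit{s}{x}=\edit{x}{y}=\edit{y}{t}=1$ and applies the $(1,1)$-guarantee to $(\kmer{x},\kmer{y})$; the plain triangle inequality then gives $\edit{s}{z}\le 2$ and $\edit{t}{z}\le 2$ with no rigidity argument needed at all. For one substitution plus an indel pair it first performs the substitution to reach $\kmer{x}$ with $\edit{s}{x}=1$ and $\edit{x}{t}=2$, and then fabricates $\kmer{y}$ as a substitution of $\kmer{x}$ at the position that the remaining script deletes, so that the one position where $\kmer{z}$ may disagree with $\kmer{x}$ is erased on the way to $\kmer{t}$. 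You would need to restructure your case analysis along these lines for the argument to go through.
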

\begin{proof}
  As $f_B^r$ is already proved to be $(2,5)$-sensitive,
  to show it is $(3,5)$-sensitive, we just need to prove that,
  for any two sequences $\kmer{s}, \kmer{t}\in \S_n$ with $\edit{s}{t}=3$,
  $f^2_B(\kmer{s})\cap f^2_B(\kmer{t}) = N^2_n(\kmer{s}) \cap N^2_n(\kmer{t}) \cap B \neq\emptyset$.
  If the three edits are all substitutions, then there are
  length-$n$ sequences
  $\kmer{x}$ and $\kmer{y}$ such that
  $\edit{s}{x}=\edit{x}{y}=\edit{y}{t}=1$.
  Since $B$ is $(1,1)$-guaranteed, there is a length-$n$
  sequence $\kmer{z}\in B$
  with $\edit{x}{z}\leq 1$ and $\edit{y}{z}\leq 1$.
  By triangle inequality, $\edit{s}{z}\leq\edit{s}{x}+\edit{x}{z}\leq 2$;
  $\edit{t}{z}\leq\edit{t}{y}+\edit{y}{z}\leq 2$.
  So $\kmer{z}\in N^2_n(\kmer{s}) \cap N^2_n(\kmer{t}) \cap B $.

  If the three edits are one substitution and a pair of indels,
  then there is a length-$n$ sequence $\kmer{x}$ such that $\edit{s}{x}=1$
  and $\edit{x}{t}=2$ where the two edits between $\kmer{x}$ and $\kmer{t}$
  can only be achieved by one insertion and one deletion.
  Let $i$ be the position in $\kmer{x}$ where the deletion between
  $\kmer{x}$ and $\kmer{t}$ takes place.
  Let $\kmer{y}$ be a length-$n$ sequence obtained from $\kmer{x}$ by a
  substitution at position $i$, so $\edit{x}{y}=1$.
  Since $B$ is $(1,1)$-guaranteed, there is a length-$n$ sequence
  $\kmer{z}\in B$ with $\edit{x}{z}\leq 1$ and $\edit{y}{z}\leq 1$.
  Then $\edit{s}{z}\leq \edit{s}{x} + \edit{x}{z}\leq 2$.
  Observe that $\kmer{x}$ and $\kmer{z}$ differ by at most
  one substitution at position $i$, which will be deleted when
  transforming to $\kmer{t}$.
  So the two edits from $\kmer{x}$ to $\kmer{t}$ can also transform
  $\kmer{z}$ to $\kmer{t}$, namely, $\edit{t}{z}\leq 2$.
  Thus, $\kmer{z}\in N^2_n(\kmer{s}) \cap N^2_n(\kmer{t}) \cap B $.
\end{proof}


\section{Summary of proved LSB functions}\label{sec:summary}
We proposed two sets of LSB functions and studied the efficiency of them in terms of $|B|$, the total number
of buckets, and $|f(\kmer{s})|$, the number of buckets a specific
length-$n$ sequence $\kmer{s}$ occupies.
The results are summarized in Table~\ref{tab:results}.
\begin{table}[!ht]
  \centering
  \caption{Results on $(d_1, d_2)$-sensitive bucketing functions
    of length-$n$ sequences.
    Entries with $\leq$ show the best known upper bounds.
    Entries marked with a single star cannot be reduced under the
    specific bucketing method.
    Entries marked with double stars cannot be reduced in general.
    In column $B$, we use $B_n^i$ to refer to a $(1,1)$-guaranteed subset
    constructed in Section~\ref{sec:1-construction}.}
  \label{tab:results}
  \resizebox{\textwidth}{!}{%
  \begin{tabular}{lllllll}
    $(d_1, d_2)$-sensitive & $B$ & $|B|$ & $|f(\kmer{s})|$ & Ref.\\
    \hline
    $(1, 2)$ & $\{1,\ldots, |B|\}$ & $n|\Sigma|^{n-1}$\dstar & $n$\dstar & Theorem~\ref{thm:ints-1}\\
    $(1, 3)$ & $\S_n$ & $|\Sigma|^n$ & $|N_n^1(\kmer{s})|=(|\Sigma|-1)n+1$ & Lemma~\ref{lem:int-1-injective}\\
    $(1, 3)$ & $B_n^i$ & $|\Sigma|^{n-1}$\star & $\begin{cases}1&\text{if } \kmer{s}\in B\\k&\text{if } \kmer{s}\not\in B\end{cases}$\star & Lemma~\ref{lem:tau1}--\ref{lem:1-fs}\\
    $(3, 5)$ & $B_n^i$ & $|\Sigma|^{n-1}$ & $\leq |N_n^2(\kmer{s})|$ & Theorem~\ref{thm:h2c}\\
    $(r, 2r+1)$, $r>1$ & $B_n^i$ & $|\Sigma|^{n-1}$ & $\leq |N_n^{r}(\kmer{s})|$ & Lemma~\ref{lem:1-implies-all}, \ref{lem:1-construction}\\
    $(2r-1, 2r+1)$, $r\geq 3$ odd & $\S_n$ & $|\Sigma|^n$ & $|N_n^r(\kmer{s})|$ & Lemma~\ref{lem:int-1-injective}\\
    $(2r, 2r+1)$, $r\geq 2$ even& $\S_n$ & $|\Sigma|^n$ & $|N_n^r(\kmer{s})|$ & Lemma~\ref{lem:int-1-injective}\\
    \hline
  \end{tabular}}
\end{table}

\commentblock{
\mingfu{revise above table as we are not including $l<n$ methods;
  write explicitly under column $B$ if the optimal $(1,1)$-guaranteed subset is used}

\ke{The commented out corollary in section 4.2 corresponds to the red row
  in the table. The only difference from the previous row is that it might
  have a smaller |B|. We can remove it if you think it's unnecessary.}
}

\section{Experimental results on the gapped LSB functions}
\label{sec:experiments}
Several gapped LSB functions are introduced in Section~\ref{sec:sampling}.
Now we investigate their behavior at the gap.
We pick 3 LSB functions to experiment, corresponding to the rows 2--4 in Table~1.
For $d=1, 2, \ldots, 6$, we generate $100,000$ random pairs
$(\kmer{s}, \kmer{t})$ of sequences of length $20$
with edit distance $d$.
Each one of the picked LSB functions $f^r_B$ is applied and the
number of pairs that share a bucket under $f^r_B$ is recorded.
The code can be found at
\url{https://github.com/Shao-Group/lsbucketing}.
The results are shown in Figure~\ref{fig:lsb}.
\begin{figure}[!ht]
  \includegraphics[width=\textwidth]{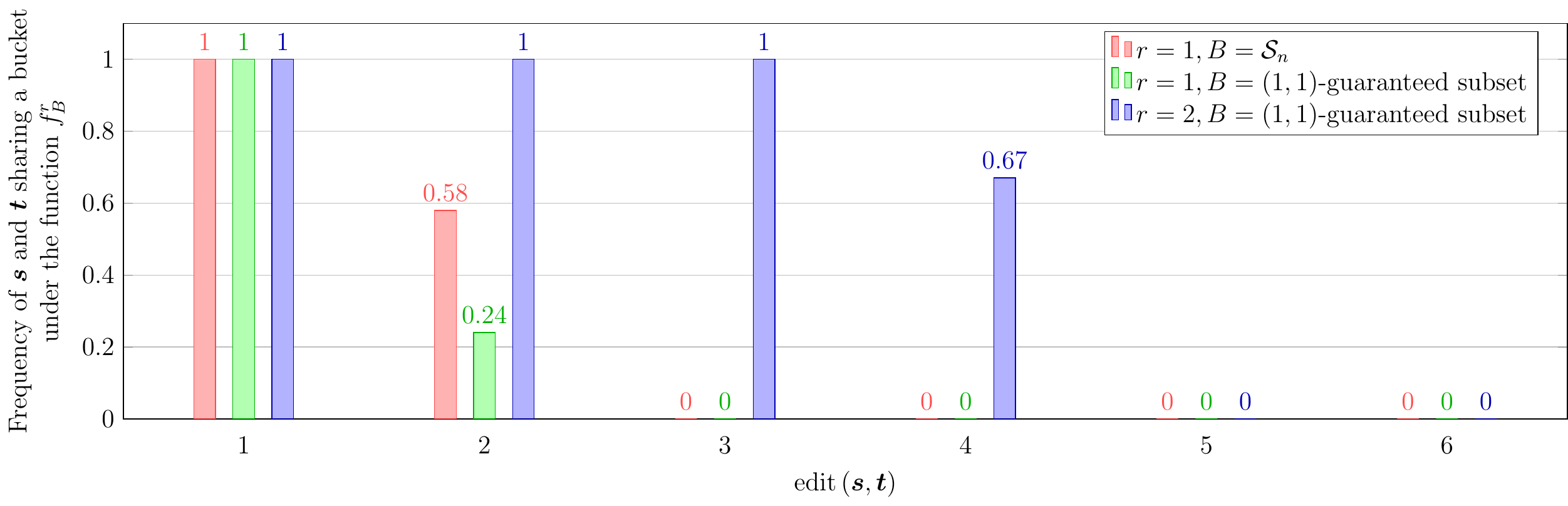}
  \caption{Probabilities (estimated by frequencies) that
    two sequences share a bucket with respect
    to their edit distance under three gapped LSB
    functions~(red, green, and blue bars correspond to
    the rows 2--4 of Table~\ref{tab:results}).
  }\label{fig:lsb}
\end{figure}

Recall that Lemma~\ref{lem:int-1-injective} implies $f^r_{\S_n}$ is
$(2r-1, 2r+1)$-sensitive when $r$ is odd.
The discussion after the proof shows that the gap at $2r$ indeed exists.
In particular, if $\kmer{s}$ can only be transformed to $\kmer{t}$ by
$r$ pairs of indels, then $N_n^r(\kmer{s})\cap N_n^r(\kmer{t})=\emptyset$.
On the other hand, if there are some substitutions among the
$2r$ edits between $\kmer{s}$ and $\kmer{t}$,
then by a similar construction as in the case where
$r$ is even, we can find a length-$n$ sequence $\kmer{v}$ such that
$\edit{s}{v}=\edit{v}{t}=r$.
Motivated by the this observation, we further explore
the performance of the LSB functions at the gap for different types of edits.
Given a gapped LSB function $f$, for the gap at $d$,
define categories $0,\ldots,\lfloor d/2\rfloor$
corresponding to the types of edits:
a pair of length-$n$ sequences with edit distance $d$ is in the $i$-th category
if they can be transformed to each other with 
$i$ pairs of indels (and $d-2i$ substitutions) but not $i-1$ pairs of indels
(and $d-2i+2$ substitutions).
Figure~\ref{fig:gap} shows the results for the three 
LSB functions in Figure~\ref{fig:lsb} at their respective gaps
with respect to different types of edits.
Observe that the result for $f^1_{\S_n}$ (in red) agrees with our analysis above.

\begin{figure}[!ht]
  \includegraphics[width=\textwidth]{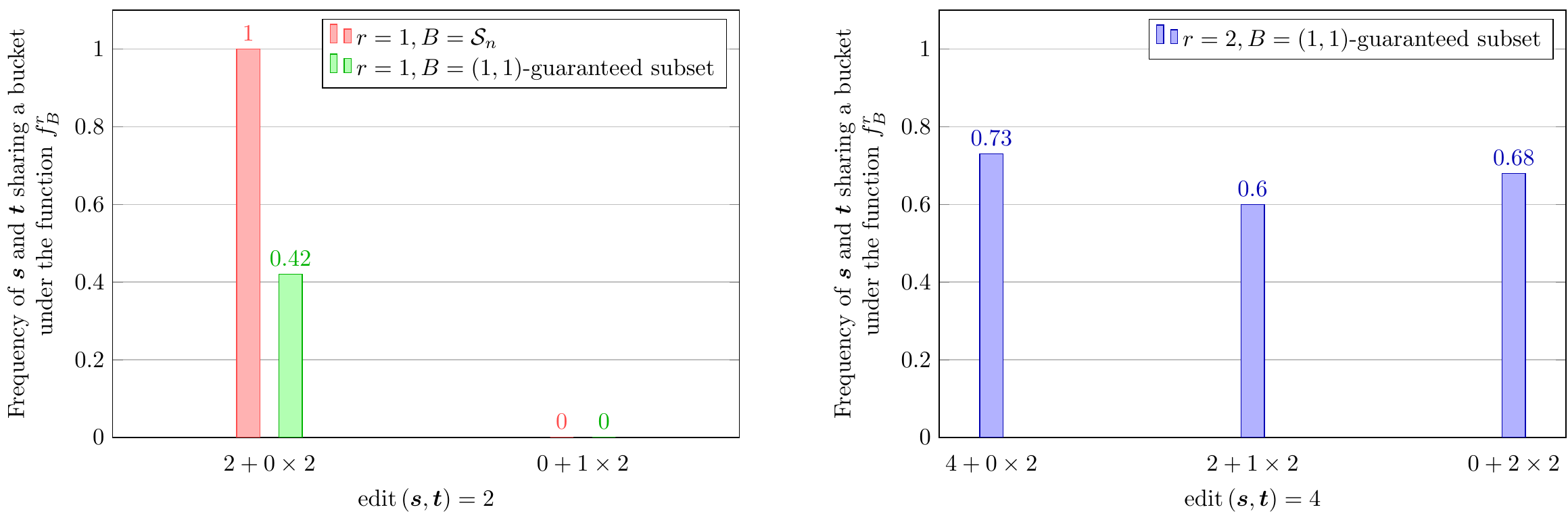}
  \caption{Probabilities (estimated by frequencies) that
    two sequences share a bucket with respect
    to their edit type under three gapped LSB functions.
    The types of edits are labeled in the format $a+b\times 2$ where $a$ is the
    number of substitutions and $b$ is the number of pairs of indels.
    Left: two $(1,3)$-sensitive bucketing functions~(rows 2 and 3 of Table~1).
    Right: the $(3,5)$-sensitive bucketing function~(row 4 of Table~1).
  }\label{fig:gap}
\end{figure}

\section{Conclusion and Discussion} \label{sec:discussion}
\commentblock{
\textcolor{blue}{Move this part to Discussion section?
We can argue that the OR-amplification method is also applicable to our LSB function.}
A commonly used boosting technique known as the OR-amplification combines
multiple locality-sensitive hashing functions in parallel,
which can be viewed as sending each sequence into multiple buckets such that
with high probability, similar sequences share at least one bucket.

- the properties and structures revealed in the proof are of independent interests
for designing LSH functions and beyond.
}

We introduce locality-sensitive bucketing~(LSB) functions, that generalize
locality-sensitive hashing~(LSH) functions by allowing it to map a sequence into
multiple buckets. This generalization makes the LSB functions easier to construct,
while guaranteeing high sensitivity and specificity in a deterministic manner.
We construct such functions, prove their properties, and show that
some of them are optimal under proposed criteria.  
We also reveal several properties and structures of the
metric space $\S_n$, 
which are of independent interests for studying LSH functions and
the edit distance.


Our results for LSB functions can be improved in several aspects.
An obvious open problem is to design $(d_1, d_2)$-sensitive functions
that are not covered here.
For this purpose, one direction is to construct optimal $(r,r)$-guaranteed
subsets for $r>1$.
As an implication of Lemma~\ref{lem:1-fs}, it is worth noting that the
optimal $(1,1)$-guaranteed subset is a maximal independent set in
the undirected graph $G_n^1$ whose vertex set is $\S_n$ and
each sequence is connected to all its $1$-neighbors.
It is natural to suspect that similar results hold for $(r,r)$-guaranteed subsets with larger $r$.
Another approach is to use other more well-studied sets as buckets
and define LSB functions based on their connections with $\S_n$.
This is closely related to the problem of embedding $\S_n$
which is difficult as noted in the introduction.
Our results in Section~\ref{sec:integers} suggest a new angle to
this challenging problem:
instead of restricting our attention to embedding $\S_n$ into metric spaces,
it may be beneficial
to consider a broader category of spaces that are equipped with
a non-transitive relation (here in LSB functions we used subsets of integers with the
``have a nonempty intersection'' relation).
Yet another interesting future research direction would be to
explore the possibility of improving the practical time and space efficiency
of computing and applying LSB functions.

A technique commonly used to boost the sensitivity of an LSH function is known as the OR-amplification.
It combines multiple LSH functions in parallel,
which can be viewed as sending each sequence into multiple buckets such that
the probability of having similar sequences in one bucket is higher
than using the individual functions separately.
However, as a side effect, the OR-amplification hurts specificity: the chance that dissimilar sequences share a bucket also increases.
It is therefore necessary to combine it with other techniques and choosing parameters to balance sensitivity and specificity is a delicate work.
On contrast, the LSB function introduced in this paper
achieves a provably optimal separation of similar and dissimilar sequences.
In addition, the OR-amplification approach can also be applied on top of the
LSB functions as needed.

\bibliography{tolerance}

\begin{thebibliography}{10}

\bibitem{altschul1990basic}
Stephen~F Altschul, Warren Gish, Webb Miller, Eugene~W Myers, and David~J
  Lipman.
\newblock Basic local alignment search tool.
\newblock {\em Journal of molecular biology}, 215(3):403--410, 1990.

\bibitem{altschul1997gapped}
Stephen~F Altschul, Thomas~L Madden, Alejandro~A Sch{\"a}ffer, Jinghui Zhang,
  Zheng Zhang, Webb Miller, and David~J Lipman.
\newblock Gapped blast and psi-blast: a new generation of protein database
  search programs.
\newblock {\em Nucleic acids research}, 25(17):3389--3402, 1997.

\bibitem{BJKK04}
Z.~Bar-Yossef, T.S. Jayram, R.~Krauthgamer, and R.~Kumar.
\newblock Approximating edit distance efficiently.
\newblock In {\em 45th Annual IEEE Symposium on Foundations of Computer
  Science}, pages 550--559, 2004.

\bibitem{berlin2015assembling}
Konstantin Berlin, Sergey Koren, Chen-Shan Chin, James~P Drake, Jane~M
  Landolin, and Adam~M Phillippy.
\newblock Assembling large genomes with single-molecule sequencing and
  locality-sensitive hashing.
\newblock {\em Nature Biotechnology}, 33(6):623--630, 2015.

\bibitem{flash}
Andrea Califano and Isidore Rigoutsos.
\newblock {FLASH}: A fast look-up algorithm for string homology.
\newblock In {\em Proceedings of IEEE Conference on Computer Vision and Pattern
  Recognition}, pages 353--359. IEEE, 1993.

\bibitem{chen2018comprehensive}
Junjie Chen, Mingyue Guo, Xiaolong Wang, and Bin Liu.
\newblock A comprehensive review and comparison of different computational
  methods for protein remote homology detection.
\newblock {\em Briefings in Bioinformatics}, 19(2):231--244, 2018.

\bibitem{DGKM19}
Dan DeBlasio, Fiyinfoluwa Gbosibo, Carl Kingsford, and Guillaume Mar\c{c}ais.
\newblock Practical universal $k$-mer sets for minimizer schemes.
\newblock In {\em Proceedings of the 10th ACM International Conference on
  Bioinformatics, Computational Biology and Health Informatics~(BCB'19)}, page
  167–176, New York, NY, USA, 2019. Association for Computing Machinery.

\bibitem{jain2018nanopore}
Miten Jain, Sergey Koren, Karen~H Miga, Josh Quick, Arthur~C Rand, Thomas~A
  Sasani, John~R Tyson, Andrew~D Beggs, Alexander~T Dilthey, Ian~T Fiddes,
  et~al.
\newblock Nanopore sequencing and assembly of a human genome with ultra-long
  reads.
\newblock {\em Nature Biotechnology}, 36(4):338--345, 2018.

\bibitem{KR09}
Robert Krauthgamer and Yuval Rabani.
\newblock Improved lower bounds for embeddings into $l_1$.
\newblock {\em SIAM Journal on Computing}, 38(6):2487--2498, 2009.

\bibitem{minimap2}
Heng Li.
\newblock Minimap2: pairwise alignment for nucleotide sequences.
\newblock {\em Bioinformatics}, 34(18):3094--3100, 2018.

\bibitem{patternhunter}
Bin Ma, John Tromp, and Ming Li.
\newblock Patternhunter: faster and more sensitive homology search.
\newblock {\em Bioinformatics}, 18(3):440--445, 2002.

\bibitem{mak2006indel}
Denise Mak, Yevgeniy Gelfand, and Gary Benson.
\newblock {Indel seeds for homology search}.
\newblock {\em Bioinformatics}, 22(14):e341--e349, 2006.

\bibitem{marccais2018asymptotically}
Guillaume Mar{\c{c}}ais, Dan DeBlasio, and Carl Kingsford.
\newblock Asymptotically optimal minimizers schemes.
\newblock {\em Bioinformatics}, 34(13):i13--i22, 2018.

\bibitem{omh}
Guillaume Mar{\c{c}}ais, Dan DeBlasio, Prashant Pandey, and Carl Kingsford.
\newblock Locality-sensitive hashing for the edit distance.
\newblock {\em Bioinformatics}, 35(14):i127--i135, 2019.

\bibitem{Mc21}
Samuel McCauley.
\newblock Approximate similarity search under edit distance using
  locality-sensitive hashing.
\newblock In {\em 24th International Conference on Database Theory (ICDT
  2021)}. Schloss Dagstuhl-Leibniz-Zentrum f{\"u}r Informatik, 2021.

\bibitem{OPM+17}
Yaron Orenstein, David Pellow, Guillaume Mar{\c{c}}ais, Ron Shamir, and Carl
  Kingsford.
\newblock Designing small universal $k$-mer hitting sets for improved analysis
  of high-throughput sequencing.
\newblock {\em PLoS Computational Biology}, 13(10):e1005777, 2017.

\bibitem{OR07}
Rafail Ostrovsky and Yuval Rabani.
\newblock Low distortion embeddings for edit distance.
\newblock {\em Journal of the ACM (JACM)}, 54(5):23--es, 2007.

\bibitem{rhoads2015pacbio}
Anthony Rhoads and Kin~Fai Au.
\newblock {PacBio} sequencing and its applications.
\newblock {\em Genomics, Proteomics \& Bioinformatics}, 13(5):278--289, 2015.

\bibitem{roberts2004reducing}
Michael Roberts, Wayne Hayes, Brian~R Hunt, Stephen~M Mount, and James~A Yorke.
\newblock Reducing storage requirements for biological sequence comparison.
\newblock {\em Bioinformatics}, 20(18):3363--3369, 2004.

\bibitem{roberts2004preprocessor}
Michael Roberts, Brian~R Hunt, James~A Yorke, Randall~A Bolanos, and Arthur~L
  Delcher.
\newblock A preprocessor for shotgun assembly of large genomes.
\newblock {\em Journal of Computational Biology}, 11(4):734--752, 2004.

\bibitem{strobemer}
Kristoffer Sahlin.
\newblock Effective sequence similarity detection with strobemers.
\newblock {\em Genome Research}, 31(11):2080--2094, 2021.

\bibitem{isocon}
Kristoffer Sahlin, Marta Tomaszkiewicz, Kateryna~D Makova, and Paul Medvedev.
\newblock Deciphering highly similar multigene family transcripts from {Iso-Seq
  data with IsoCon}.
\newblock {\em Nature Communications}, 9(1):1--12, 2018.

\bibitem{schleimer2003winnowing}
Saul Schleimer, Daniel~S Wilkerson, and Alex Aiken.
\newblock Winnowing: local algorithms for document fingerprinting.
\newblock In {\em Proceedings of the 2003 ACM SIGMOD (International Conference
  on Management of Data)}, pages 76--85, 2003.

\bibitem{song2020overlap}
Yan Song, Haixu Tang, Haoyu Zhang, and Qin Zhang.
\newblock Overlap detection on long, error-prone sequencing reads via smooth
  $q$-gram.
\newblock {\em Bioinformatics}, 36(19):4838--4845, 2020.

\end{thebibliography}

\end{document}